\documentclass[sn-mathphys-num]{sn-jnl}
\usepackage{booktabs,longtable,array,multirow,float}
\usepackage[table]{xcolor}
\usepackage{graphicx,amsmath,amssymb,amsfonts,amsthm,mathrsfs}
\usepackage[title]{appendix}
\usepackage{textcomp,manyfoot,algorithm,algorithmicx,algpseudocode}
\usepackage{dsfont,geometry,natbib,hyperref,bm}
\theoremstyle{thmstyleone}
\newtheorem{theorem}{Theorem}[section]
\newtheorem{proposition}[theorem]{Proposition}
\newtheorem{corollary}[theorem]{Corollary}

\theoremstyle{thmstyletwo}

\newtheorem{remark}[theorem]{Remark}

\theoremstyle{thmstylethree}
\newtheorem{definition}[theorem]{Definition}

\usepackage{hyperref}

\hypersetup{
	pdftitle={Quantifying the 2027 Solvency II Risk Margin Reform},
	pdfauthor={Khalil Said and Fatima Zahrae Chaayra},
	pdfsubject={Actuarial and quantitative analysis of the 2027 Solvency II Risk Margin reform},
	pdfkeywords={Solvency II, Risk Margin, Cost of Capital, Solvency Capital Requirement, Technical Provisions, Insurance Valuation, Market-Consistent Valuation, Interest Rates},
	pdfdisplaydoctitle=true
}

\begin{document}

\title[Quantifying the 2027 Solvency II Risk Margin Reform]{Quantifying the 2027 Solvency II Risk Margin Reform}

\author*[1,2]{\fnm{Khalil} \sur{Said}}\email{khalil.said@univ-pau.fr}
\author[2]{\fnm{Fatima Zahrae} \sur{Chaayra}}\email{fzchaayra@insea.ac.ma}
\affil[1]{
	\orgname{Université de Pau et des Pays de l’Adour, E2S UPPA, TREE, Pau},
	\orgaddress{\country{France}}} 
\affil[2]{
	\orgname{National Institute of Statistics and Applied Economics, Rabat}, \orgaddress{\country{Morocco}}}

\abstract{%
	The 2027 Solvency II reform recalibrates the Risk Margin by reducing the prescribed cost-of-capital rate from 6\% to 4.75\% and introducing a time-dependent attenuation of future Solvency Capital Requirements. This paper develops an analytical and numerical framework for characterizing the effect of the final regulatory calibration. By normalizing discounted projected capital requirements into a probability distribution over run-off time, we obtain an exact representation of the ratio between the revised and previous Risk Margins. The framework yields sharp universal and horizon-specific bounds, characterizes the effect of later capital timing through stochastic dominance, and shows that mean run-off time alone does not determine the reform effect when temporal dispersion varies. Additional bounds are derived conditional on mean run-off time and horizon. Conditional on a given projected SCR path, the mechanical reduction lies between 20.83\% and 60.42\%. For proportional Best Estimate projections, an exact covariance decomposition identifies how departures from proportionality affect the relative reform ratio. Holding the projected SCR path fixed, the revised formula has a lower direct interest-rate semi-elasticity than the previous calibration. A reduced-form stochastic extension further quantifies convexity effects arising from uncertainty in capital persistence. Numerical applications reconstruct Risk Margin calculations from published actuarial run-off profiles and complement them with controlled long-horizon, interest-rate, and persistence experiments. The results show that the reform effect is governed by the temporal structure of future capital and provide tractable tools for assessing projected capital profiles, Risk Margin simplifications, and direct discount-rate sensitivity.
}

\keywords{Solvency II, Risk Margin, Cost of Capital,
	Solvency Capital Requirement, Technical Provisions,
	Insurance Valuation, Market-Consistent Valuation,
	Interest Rates}
\pacs[JEL Classification]{C02, C15, G22, G28, G32}
\pacs[MSC 2020]{91G05, 60E15, 91G70}

\maketitle
\section*{Introduction}

On 30 January 2027, the Solvency II Risk Margin will change from a cost-of-capital calculation with a constant regulatory loading into one that explicitly differentiates future capital according to its position in the run-off. The amendment is compact in regulatory form, but its effect is not uniform across insurance portfolios. A unit of capital required in the first years of the run-off will no longer receive the same regulatory treatment as an otherwise identical unit required much later. The reform therefore changes not only the level of the Risk Margin, but also the way in which future capital contributes to it over time.

Directive (EU) 2025/2 establishes the revised legislative framework, including a prescribed cost-of-capital rate of 4.75\% from 30 January 2027 and a time-dependent adjustment to future capital requirements~\cite{europeanunion2025solvency}. Commission Delegated Regulation (EU) 2026/269 specifies the final numerical calibration of that adjustment \cite{europeancommission2026delegated}. If \(SCR_t\) denotes the projected Solvency Capital Requirement of the reference undertaking and \(D_t\) the corresponding risk-free discount factor, the previous and revised calculations can be written as

\[
\begin{aligned}
	RM_{\mathrm{old}}
	&=
	0.06\sum_{t\geq0}SCR_tD_t,\\
	RM_{\mathrm{new}}
	&=
	0.0475\sum_{t\geq0}
	\max\!\left(0.96^t,0.50\right)
	SCR_tD_t.
\end{aligned}
\]

The reduction in the cost-of-capital rate produces the same proportional effect on every future capital contribution. The time-dependent adjustment does not. Its impact depends on the timing of the discounted SCR sequence and becomes progressively stronger along the run-off until the regulatory floor is reached. The revised formula consequently introduces a run-off timing effect that cannot be summarized by the change in the cost-of-capital rate alone.

This recalibration represents a substantive change in the design of the Solvency II Risk Margin. For non-replicable insurance obligations, Solvency II technical provisions combine a Best Estimate with a Risk Margin representing the cost associated with the capital required by a reference undertaking to support the obligations over their remaining lifetime. The cost-of-capital construction translates this transfer-value principle into a sequence of future regulatory capital costs. The resulting valuation problem lies at the intersection of actuarial reserving, capital measurement and incomplete-market valuation.

The theoretical foundations of this construction have been examined well beyond its regulatory implementation. Salzmann and W\"uthrich (2010)\cite{salzmann2010cost} developed a cost-of-capital margin for general insurance liability run-off, while M\"ohr (2011)\cite{mohr2011market} developed a multi-period market-consistent replication framework with recursive backward valuation and showed that a Best Estimate plus Risk Margin representation can be constructed as a simplification providing an upper bound under suitable conditions. Floreani (2011)\cite{floreani2011risk} examined the economic assumptions underlying the prescribed methodology and questioned the theoretical justification for a fixed unitary cost-of-capital rate. W\"uthrich et al. (2011)\cite{wuthrich2011risk} and Tsanakas et al. (2013)\cite{tsanakas2013market} further studied non-life run-off risk and market value margins from risk-measure and hedging perspectives.

A broader literature subsequently embedded these questions in market-consistent and time-consistent insurance valuation. Pelsser and Stadje (2014)\cite{pelsser2014time} introduced and characterized two-step market evaluations combining actuarial principles with market consistency and established their relation to dynamic time consistency. Engsner et al. (2017)\cite{engsner2017insurance} formulated a computable multi-period cost-of-capital approach incorporating capital requirements, limited liability and the preferences of capital providers. Dhaene et al. (2017)\cite{dhaene2017fair} characterized fair insurance valuation through the interaction between market-consistent hedging and actuarial valuation, while Barigou et al. (2019)\cite{barigou2019fair} extended this perspective to a dynamic multi-period setting. Barigou et al. (2022)\cite{barigou2022insurance} developed a two-step generalized regression approach to insurance valuation, and Gambaro (2025)\cite{gambaro2025capital} formalized the capital-on-capital component associated with uncertainty in future regulatory capital. Together, these contributions place the Risk Margin within a dynamic valuation problem involving unhedgeable risk, future capital requirements and their evolution over time.

The projection of future capital forms a second strand of the literature that is particularly relevant to the 2027 reform. Solvency II combines a one-year solvency perspective with obligations that may remain outstanding for many years. Ohlsson and Lauzeningks (2009)\cite{ohlsson2009oneyear} clarified the one-year non-life insurance risk and showed how reserve duration can support simplified Risk Margin calculations. W\"uthrich et al. (2009)\cite{wuthrich2009uncertainty}, Diers and Linde (2013)\cite{diers2013multiyear}, and Diers et al. (2016)\cite{diers2016addendum} developed the analysis of claims development uncertainty and multi-year reserve risk. Christiansen and Niemeyer (2014)\cite{christiansen2014fundamental} compared alternative mathematical interpretations of the Solvency Capital Requirement and generalized its definition to future points in time, while Bauer et al. (2012)\cite{bauer2012calculation} developed a mathematical framework for SCR calculation and compared numerical implementations based on nested simulations.

These difficulties explain the practical importance of simplified SCR projections. Dreksler et al. (2015)\cite{dreksler2015solvency} provide a detailed actuarial treatment of Solvency II technical provisions and discuss Risk Margin approximations based on proportional run-off and duration-type methods. England et al. (2019)\cite{england2019lifetime} connect the lifetime and one-year views of reserve risk and compare future capital profiles based on the Best Estimate, the standard deviation and the \(99.5\%\) Value-at-Risk. In their illustrative example, the Best Estimate profile is close to the standard-deviation and Value-at-Risk profiles, while the authors emphasize that this proximity is data-dependent. Pelkiewicz et al. (2020)\cite{pelkiewicz2020review} similarly emphasize the practical difficulty of projecting future SCRs and the sensitivity of simplified approaches to assumptions about the evolution of risk.

The calibration of the cost of capital has been questioned separately. Albrecher et al. (2022)\cite{albrecher2022cost} provide an economic foundation for the cost-of-capital rate through the equilibrium between policyholders, shareholders and the regulator, illustrating that the parameter is not economically innocuous. Concerns have also been raised about the magnitude, volatility and interest-rate sensitivity of the Risk Margin, particularly for long-duration liabilities \cite{pelkiewicz2020review}. Waszink (2024)\cite{waszink2024comments} examines alternative interpretations of the cost-of-capital construction and earlier reform proposals, showing that economically defensible specifications can generate substantially different outcomes for long liabilities. Gambaro (2025)\cite{gambaro2025capital} adds a further dynamic dimension through the uncertainty surrounding future regulatory capital itself.

The final European reform addresses several of these concerns. Commission Delegated Regulation (EU) 2026/269 refers to insufficient recognition of the natural decline of some risks over time and to the possibility of double counting, including for lapse and mortality risks \cite{europeancommission2026delegated}. The lower cost-of-capital rate produces an immediate proportional reduction, while the exponential factor further attenuates capital contributions arising later in the run-off. The revised Guidelines published by the European Insurance and Occupational Pensions Authority (EIOPA) in July 2026 describe the lambda factor as accounting for the time dependency of risks and reducing the Risk Margin particularly for long-term liabilities \cite{eiopa2026valuation}. They retain the hierarchy of SCR projection simplifications and require projected SCRs to avoid incorporating a second time the temporal effect already represented by the lambda factor. The accompanying Final Report sets out draft Regulatory Technical Standards intended to align the simplified Risk Margin calculation under Article 58 with the revised framework \cite{eiopa2026riskmarginrts}.

The final calibration raises a quantitative question that is distinct from the earlier debate over the economic design of the Risk Margin. Once the cost-of-capital rate and the lambda sequence are fixed, the effect of the reform remains portfolio dependent because future capital is distributed differently over time. The relevant problem is therefore to determine which features of the SCR run-off govern that effect and how much information about the full capital profile is required to quantify or bound it. The existing literature reviewed above addresses the foundations of cost-of-capital valuation, dynamic liability valuation, future SCR projection, simplification methods and alternative Risk Margin designs. To the best of our knowledge, however, it does not provide an integrated analytical treatment of the final calibration in terms of run-off timing, temporal dispersion, proportional projection error and direct discount-rate sensitivity.

This paper provides such a characterization by normalizing discounted future SCR contributions into an auxiliary probability distribution over run-off time. Under this representation, the ratio between the revised and previous Risk Margins becomes an expectation of the regulatory lambda factor, providing a common structure for the subsequent analysis. The framework yields sharp global and finite-horizon bounds, establishes the effect of later capital timing through stochastic ordering, and separates mean run-off timing from temporal dispersion. Additional sharp bounds are obtained conditional on both the projection horizon and the mean discounted-SCR run-off time.

The framework also gives an exact characterization of proportional Best Estimate projections. Defining future capital intensity as the ratio between the projected SCR and the Best Estimate, the difference between the full reform ratio and EIOPA's Method 2 approximation is determined by a covariance between capital intensity and the lambda sequence. This identifies the conditions under which proportional projection overstates or understates the relative reform effect, while remaining distinct from the approximation error in the absolute Risk Margin level. Holding a common projected SCR path fixed under parallel shifts in continuously compounded discount rates, the lambda weighting also shifts relative weight towards earlier capital contributions. The direct interest-rate semi-elasticity of the revised Risk Margin is consequently lower than under the previous calibration. The difference admits an exact covariance representation, which also characterizes the response of the new-to-old Risk Margin ratio to the imposed rate shift.

The deterministic framework is further extended to uncertainty in capital run-off persistence. A reduced-form stochastic persistence model quantifies the convexity effect generated when an uncertain persistence parameter is replaced by its mean and derives bounds on the associated Jensen gap under the revised calibration. A comparison between structural persistence and independent transitory variation further isolates the effect of uncertainty that persists across the capital run-off.

The numerical analysis combines these results with published actuarial benchmarks and controlled experiments. Capital profiles reported by England et al. (2019)\cite{england2019lifetime} are reconstructed under their original assumptions and evaluated under the 2027 calibration. The published illustrative SCR run-off in Dreksler et al. (2015)\cite{dreksler2015solvency} provides a second benchmark. Geometric capital profiles, parallel interest-rate shifts and stochastic persistence experiments are then used to isolate the comparative-static mechanisms established analytically. The numerical evidence therefore complements the theoretical results without making them dependent on a particular portfolio specification.

The remainder of the paper develops the analytical framework, derives the timing and sharp-bound results, examines Best Estimate approximations and direct interest-rate sensitivity, and extends the analysis to stochastic run-off persistence. The numerical section evaluates these mechanisms across published and controlled capital profiles, before the discussion considers their implications for Risk Margin projection, regulatory simplifications and the interpretation of the 2027 reform.

\section{Analytical framework}
\label{AnalyticalFramework}

The revised Solvency II Risk Margin preserves the cost-of-capital structure of the previous calibration while modifying two components of the calculation. The prescribed cost-of-capital rate falls from \(6\%\) to \(4.75\%\), and projected future Solvency Capital Requirements are multiplied by a time-dependent attenuation factor \cite{europeancommission2026delegated,eiopa2026valuation}. The second modification differentiates capital contributions according to their position in the run-off and therefore depends intrinsically on the temporal profile of future capital.

The analysis isolates the mechanical effect of this revised calibration. Unless stated otherwise, the projected SCR path is held fixed when the previous and revised formulas are compared. This convention defines a controlled comparative-static exercise focused on the direct effect of the regulatory recalibration. Throughout, \(SCR_t\) denotes the projected capital requirement of the reference undertaking before application of the new time-dependent factor. This convention is consistent with EIOPA's requirement that the time dependency already captured by the lambda factor should not be incorporated a second time into projected SCRs \cite{eiopa2026valuation}.

\subsection{Regulatory setting and notation}
\label{RegulatorySetting}

Let \(t=0,1,\ldots\) index future annual projection dates, and let \(SCR_t \geq 0\) denote the projected Solvency Capital Requirement of the reference undertaking at time \(t\). Define

\[
D_t
=
\frac{1}
{\left(1+r_{t+1}\right)^{t+1}},
\qquad t\geq0,
\]

where \(r_{t+1}>-1\) denotes the relevant annual effective risk-free zero-coupon rate for maturity \(t+1\).

Under the previous calibration, the Risk Margin is

\begin{equation}
	RM^{\mathrm{old}}
	=
	c_0
	\sum_{t\geq0}
	SCR_tD_t,
	\qquad
	c_0=0.06.
	\label{RMOld}
\end{equation}

Under the revised calibration applicable from 30 January 2027,

\begin{equation}
	RM^{\mathrm{new}}
	=
	c_1
	\sum_{t\geq0}
	\lambda_t SCR_tD_t,
	\qquad
	c_1=0.0475,
	\label{RMNew}
\end{equation}

where the time-dependent regulatory factor is denoted by

\begin{equation}
	\lambda_t
	=
	\max\left(0.96^t,0.50\right).
	\label{LambdaFactor}
\end{equation}

The notation follows the terminology used by EIOPA in the revised Guidelines on the valuation of technical provisions \cite{eiopa2026valuation}. For integer \(t\),

\[
\lambda_t
=
\begin{cases}
	0.96^t, & 0\leq t\leq16,\\
	0.50, & t\geq17,
\end{cases}
\]

so that the regulatory floor first becomes binding at \(t=17\).

Assume throughout that

\[
0
<
A
:=
\sum_{t\geq0}SCR_tD_t
<
\infty.
\]

This condition ensures that the previous Risk Margin is finite and strictly positive and permits the normalization introduced below. Finite-horizon formulations considered later are included as a special case.

\subsection{Discounted SCR run-off distribution}
\label{SCRRunoffDistribution}

The revised calibration acts on projected capital according to its position in the run-off. Separating the scale of future capital from its temporal allocation therefore provides a convenient representation of the reform effect.

\begin{definition}[Discounted SCR run-off distribution]
	\label{SCRRunoffDefinition}
	
	Define
	
	\[
	p_t
	=
	\frac{SCR_tD_t}
	{\sum_{s\geq0}SCR_sD_s},
	\qquad t\geq0.
	\]
	
	Let \(\tau\) be the auxiliary integer-valued random variable with probability mass function
	
	\[
	\mathbb{P}(\tau=t)=p_t.
	\]
\end{definition}

The distribution of \(\tau\) is the normalized temporal profile of discounted SCR contributions. It preserves the complete timing structure entering the Risk Margin while removing the overall scale of the projected capital path. Later values of \(\tau\) correspond to a larger relative concentration of discounted capital at distant horizons. This normalization therefore provides a scale-free representation of the timing of the capital required to support the insurance obligations throughout their run-off.

\subsection{Exact representation of the reform effect}
\label{ExactReformEffect}

The normalized run-off distribution yields an exact representation of the mechanical effect of the 2027 calibration.

\begin{proposition}[Exact representation]
	\label{ExactRepresentationProp}
	
	Let
	
	\[
	\kappa
	=
	\frac{c_1}{c_0}
	=
	\frac{0.0475}{0.06}
	=
	\frac{19}{24}.
	\]
	
	Then
	
	\begin{equation}
		\frac{RM^{\mathrm{new}}}
		{RM^{\mathrm{old}}}
		=
		\kappa
		\mathbb{E}\!\left[\lambda_\tau\right].
		\label{ExactRatio}
	\end{equation}
\end{proposition}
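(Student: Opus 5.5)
The plan is to prove the identity by direct substitution of the definitions, after first checking that every quantity involved is finite and well defined. By the standing assumption \(0<A=\sum_{t\geq0}SCR_tD_t<\infty\), the previous Risk Margin \eqref{RMOld} equals \(c_0A\). It is finite and strictly positive, so the ratio on the left of \eqref{ExactRatio} makes sense once \(RM^{\mathrm{new}}\) is shown to be finite.

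Finiteness of \(RM^{\mathrm{new}}\) is the only point that needs a moment's care. Since \(r_{t+1}>-1\), each \(D_t\) is positive, and \(SCR_t\geq0\), so every term \(\lambda_tSCR_tD_t\) is nonnegative. From \eqref{LambdaFactor} we have \(0.50\leq\lambda_t\leq1\) for all \(t\geq0\). Comparing term by term gives
\[
0.50\,A\leq\sum_{t\geq0}\lambda_tSCR_tD_t\leq A,
\]
so the series in \eqref{RMNew} converges. Because all terms are nonnegative, it may be manipulated without any concern about the order of summation.

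Next I will use Definition~\ref{SCRRunoffDefinition}, which gives \(SCR_tD_t=A\,p_t\) for every \(t\), with \(p_t\geq0\) and \(\sum_t p_t=1\). Substituting into \eqref{RMNew} yields
\[
RM^{\mathrm{new}}=c_1A\sum_{t\geq0}\lambda_tp_t=c_1A\,\mathbb{E}[\lambda_\tau].
\]
This uses the definition of expectation for the bounded function \(t\mapsto\lambda_t\) of the integer-valued random variable \(\tau\). The expectation is finite because \(\lambda\) is bounded. Dividing by \(RM^{\mathrm{old}}=c_0A\) cancels \(A\) and gives \(\kappa\,\mathbb{E}[\lambda_\tau]\) with \(\kappa=c_1/c_0\). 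The arithmetic \(0.0475/0.06=475/600=19/24\) completes the proof.

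There is no genuine obstacle here. The only step needing attention is justifying convergence of the infinite series and the term-by-term factorization, and the nonnegativity of the terms together with \(\lambda_t\in[0.5,1]\) settles this at once. Finite-horizon projections are covered by setting \(SCR_t=0\) beyond the horizon.
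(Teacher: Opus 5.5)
Your proof is correct and matches the paper's argument: write \(SCR_tD_t=A\,p_t\), factor out \(c_1A\) to get \(RM^{\mathrm{new}}=c_1A\,\mathbb{E}[\lambda_\tau]\), and divide by \(RM^{\mathrm{old}}=c_0A\). Your added check that \(0.50\,A\leq\sum_{t\geq0}\lambda_tSCR_tD_t\leq A\) makes the finiteness of \(RM^{\mathrm{new}}\) explicit, which the paper leaves implicit.
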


\begin{proof}
	By Definition \ref{SCRRunoffDefinition},
	
	\[
	SCR_tD_t
	=
	A p_t.
	\]
	
	Hence
	
	\begin{align*}
		RM^{\mathrm{new}}
		&=
		c_1
		\sum_{t\geq0}
		\lambda_tSCR_tD_t\\
		&=
		c_1A
		\sum_{t\geq0}
		p_t\lambda_t\\
		&=
		c_1A
		\mathbb{E}\!\left[\lambda_\tau\right].
	\end{align*}
	
	Since \(RM^{\mathrm{old}}=c_0A\), division gives
	
	\[
	\frac{RM^{\mathrm{new}}}
	{RM^{\mathrm{old}}}
	=
	\frac{c_1}{c_0}
	\mathbb{E}\!\left[\lambda_\tau\right]
	=
	\kappa
	\mathbb{E}\!\left[\lambda_\tau\right].
	\]
\end{proof}

Equation \eqref{ExactRatio} separates the two channels of the revised calibration. The factor \(\kappa\) captures the reduction in the prescribed cost-of-capital rate, while \(\mathbb{E}[\lambda_\tau]\) captures the interaction between the regulatory time weights and the discounted SCR run-off. The representation therefore shows how the temporal structure of an insurer's projected capital profile determines the relative effect of the revised calibration.

A direct consequence is scale invariance. For any \(a>0\), replacing \(SCR_t\) by \(aSCR_t\) at every date changes both Risk Margins by the same factor and therefore leaves their ratio, and the associated percentage reduction, unchanged. Portfolio differences in the relative reform effect are consequently driven by the temporal allocation of discounted capital rather than by its nominal scale.

Define the total relative reduction as

\[
\Delta
=
1-
\frac{RM^{\mathrm{new}}}
{RM^{\mathrm{old}}},
\]

and write

\[
L
=
\mathbb{E}\!\left[\lambda_\tau\right].
\]

Then

\[
1-\Delta=\kappa L.
\]

The equivalent additive decomposition is

\begin{equation}
	\Delta
	=
	(1-\kappa)
	+
	(1-L)
	-
	(1-\kappa)(1-L).
	\label{AdditiveDecomposition}
\end{equation}

The first two terms correspond to the stand-alone reductions generated by the cost-of-capital rate and the time-dependent factor, respectively, while the final term captures their interaction. The decomposition therefore avoids attributing the multiplicative interaction to either component separately.

The cost-of-capital adjustment alone produces a reduction of

\[
1-\kappa
=
\frac{5}{24}
\simeq20.83\%.
\]

Any further reduction is generated by the time-dependent weighting of the projected capital run-off.

\subsection{Lambda reweighting of the SCR run-off}
\label{LambdaReweighting}

The revised calibration also changes the relative contribution of different dates within the Risk Margin. Define

\begin{equation}
	p_t^{\lambda}
	=
	\frac{\lambda_t p_t}
	{\mathbb{E}[\lambda_\tau]},
	\qquad t\geq0,
	\label{LambdaWeightedProbabilities}
\end{equation}

and let \(\tau^\lambda\) denote a random variable with this probability mass function. Equivalently,

\[
p_t^\lambda
=
\frac{c_1\lambda_tSCR_tD_t}
{RM^{\mathrm{new}}},
\]

so that \(p_t^\lambda\) is the normalized share of the revised Risk Margin attributable to projection date \(t\). Since \(\lambda_t>0\), \(\tau^\lambda\) and \(\tau\) have the same support. On that support,

\begin{equation}
	\frac{p_t^\lambda}{p_t}
	=
	\frac{\lambda_t}
	{\mathbb{E}[\lambda_\tau]}.
	\label{LikelihoodRatioTilt}
\end{equation}

Because \(t\mapsto\lambda_t\) is non-increasing, the likelihood ratio in \eqref{LikelihoodRatioTilt} is non-increasing in \(t\). For discrete random variables \(X\) and \(Y\) with common support, we use the convention \(X\leq_{\mathrm{lr}}Y\) when the ratio of their probability masses \(\mathbb{P}(X=t)/\mathbb{P}(Y=t)\) is non-increasing in \(t\).

\begin{proposition}[Temporal effect of lambda reweighting]
	\label{LambdaReweightingProp}
	
	The lambda-weighted run-off distribution is earlier than the original discounted SCR run-off distribution in likelihood-ratio order and therefore in usual stochastic order,
	
	\[
	\tau^\lambda
	\leq_{\mathrm{lr}}
	\tau
	\quad\Longrightarrow\quad
	\tau^\lambda
	\leq_{\mathrm{st}}
	\tau.
	\]
\end{proposition}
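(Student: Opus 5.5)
The plan has two parts: first establish the likelihood-ratio ordering directly from \eqref{LikelihoodRatioTilt}, then derive the usual stochastic order from it by a tail-sum argument. Throughout, write \(S=\{t\geq0:p_t>0\}\) for the common support of \(\tau\) and \(\tau^\lambda\), and set \(L=\mathbb{E}[\lambda_\tau]\). We have \(L>0\) because \(\lambda_t\geq 0.50\) for all \(t\).

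For the first part, note that on \(S\) the mass ratio is \(\mathbb{P}(\tau^\lambda=t)/\mathbb{P}(\tau=t)=\lambda_t/L\). The map \(t\mapsto\lambda_t=\max(0.96^t,0.50)\) is non-increasing, being the maximum of a decreasing function and a constant. Dividing by the positive constant \(L\) preserves monotonicity. Hence the ratio is non-increasing in \(t\) on \(S\), which is exactly \(\tau^\lambda\leq_{\mathrm{lr}}\tau\) under the convention stated just before the proposition.

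For the second part, we need \(\mathbb{P}(\tau^\lambda>k)\leq\mathbb{P}(\tau>k)\) for every integer \(k\). Define the function \(h(t)=\mathbf{1}\{t>k\}-\mathbb{P}(\tau>k)\). By construction \(\mathbb{E}[h(\tau)]=0\). We then have
\[
\mathbb{P}(\tau^\lambda>k)-\mathbb{P}(\tau>k)=\mathbb{E}\!\left[\tfrac{\lambda_\tau}{L}\,h(\tau)\right]=\tfrac{1}{L}\operatorname{Cov}\!\left(\lambda_\tau,\mathbf{1}\{\tau>k\}\right).
\]
This covariance is between a non-increasing and a non-decreasing function of the same random variable \(\tau\). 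By Chebyshev's association inequality (Harris/FKG for a single variable) it is non-positive.

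To keep the argument self-contained, I would prove the association inequality directly rather than cite it. Let \(m=\inf_{t>k}\lambda_t\) and \(M=\sup_{t\leq k}\lambda_t\); monotonicity gives \(m\leq\lambda_{k+1}\leq M\). Then \(\mathbb{E}[\lambda_\tau\mathbf{1}\{\tau>k\}]\leq \lambda_{k+1}\mathbb{P}(\tau>k)\) and \(\mathbb{E}[\lambda_\tau\mathbf{1}\{\tau\leq k\}]\geq\lambda_{k+1}\mathbb{P}(\tau\leq k)\). Combining these two inequalities yields \(\mathbb{E}[\lambda_\tau\mathbf{1}\{\tau>k\}]\leq L\,\mathbb{P}(\tau>k)\), which is the desired tail inequality. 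All expectations are finite because \(\lambda\) is bounded by \(1\).

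The main obstacle is only a technical one: handling possible gaps in \(S\) and an infinite support. The bounding argument above uses only \(\lambda_t\) on \(S\), so the monotonicity comparison still goes through by replacing \(\lambda_{k+1}\) with any value between \(\sup_{t\in S,t>k}\lambda_t\) and \(\inf_{t\in S,t\leq k}\lambda_t\). Alternatively, one can cite the standard implication \(\leq_{\mathrm{lr}}\Rightarrow\leq_{\mathrm{st}}\) from Shaked and Shanthikumar.
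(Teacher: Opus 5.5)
Your proposal is correct and follows the paper's route. The likelihood-ratio ordering is read off from the non-increasing tilt \(\lambda_t/\mathbb{E}[\lambda_\tau]\). The passage to usual stochastic order is the standard implication \(\leq_{\mathrm{lr}}\Rightarrow\leq_{\mathrm{st}}\): the paper simply cites it from Shaked and Shanthikumar, whereas you prove it directly through the covariance (Chebyshev) argument.

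There is one small slip. The bounds you actually use are \(\sup_{t>k}\lambda_t=\lambda_{k+1}\) and \(\inf_{t\leq k}\lambda_t=\lambda_k\geq\lambda_{k+1}\), not the \(m\) and \(M\) you define. Also, because \(\lambda_t\) is non-increasing on all integers, gaps in the support never cause a problem.
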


\begin{proof}
	Equation \eqref{LikelihoodRatioTilt} shows that the ratio of the probability masses of \(\tau^\lambda\) and \(\tau\) is non-increasing over their common support. Hence
	\(\tau^\lambda\leq_{\mathrm{lr}}\tau\).
	The implication from likelihood-ratio order to usual stochastic order follows from the standard hierarchy of stochastic orders; see Shaked and Shanthikumar (2007)\cite{shaked2007stochastic}.
\end{proof}

The proposition formalizes the temporal reallocation induced by the revised formula within a fixed projected SCR path. Relative to the previous calculation, lambda weighting shifts the normalized distribution of Risk Margin contributions towards earlier dates. Whenever the support contains positive mass at dates associated with different values of \(\lambda_t\), the reweighted distribution differs from the original one and the ordering is non-trivial.

The \(50\%\) floor limits the extent of this reallocation. Since \(\lambda_t=0.50\) for all \(t\geq17\), the lambda mechanism no longer distinguishes among dates within the floor region. Conditional on the total discounted SCR mass assigned to that region, redistributing this mass among dates \(t\geq17\) leaves \(\mathbb{E}[\lambda_\tau]\), and therefore the reform ratio in \eqref{ExactRatio}, unchanged. For long-duration insurance portfolios, the lambda channel therefore depends on the aggregate discounted SCR mass beyond year 17, whereas interest-rate sensitivity continues to depend on its allocation across dates.

\section{SCR run-off structure and bounds}
\label{SCRRunoffBounds}

The representation introduced in Section \ref{AnalyticalFramework} makes the temporal structure of discounted capital requirements central to the analysis of the reform. This section derives bounds on the reform effect, characterizes its dependence on the location and dispersion of the SCR run-off, and examines how proportional Best Estimate projections interact with the new time-dependent weighting.

\subsection{Horizon-specific and universal bounds}
\label{HorizonBounds}

Suppose first that the discounted SCR run-off distribution has finite support contained in

\[
\{0,1,\ldots,T\}.
\]

Since \(t\mapsto\lambda_t\) is non-increasing,

\[
\lambda_T
\leq
\mathbb{E}[\lambda_\tau]
\leq
\lambda_0=1.
\]

Combining this inequality with Proposition \ref{ExactRepresentationProp} gives the following result.

\begin{proposition}[Horizon-specific bounds]
	\label{HorizonBoundsProp}
	
	If
	
	\[
	\mathbb{P}(0\leq\tau\leq T)=1,
	\]
	
	then
	
	\begin{equation}
		\kappa\lambda_T
		\leq
		\frac{RM^{\mathrm{new}}}
		{RM^{\mathrm{old}}}
		\leq
		\kappa.
		\label{HorizonRatioBounds}
	\end{equation}
	
	Equivalently,
	
	\[
	1-\kappa
	\leq
	\Delta
	\leq
	1-\kappa\lambda_T.
	\]
	
	Both bounds are sharp.
\end{proposition}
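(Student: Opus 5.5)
The plan is to combine the exact representation of Proposition \ref{ExactRepresentationProp} with monotonicity of \(t\mapsto\lambda_t\), and then to show sharpness by exhibiting admissible SCR paths that attain each bound. First, since \(\lambda_t=\max(0.96^t,0.50)\) is non-increasing in \(t\), we have \(\lambda_T\leq\lambda_t\leq\lambda_0=1\) for every \(t\in\{0,\ldots,T\}\). Because \(\mathbb{P}(0\leq\tau\leq T)=1\), taking expectations against the probabilities \(p_t\) gives \(\lambda_T\leq\mathbb{E}[\lambda_\tau]\leq1\). Multiplying by \(\kappa>0\) and invoking \eqref{ExactRatio} yields \eqref{HorizonRatioBounds}. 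The equivalent statement for \(\Delta\) then follows by rewriting through \(1-\Delta=\kappa L\).

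For sharpness, I will interpret "sharp" as attainment within the admissible class. This class consists of nonnegative SCR paths with \(0<A<\infty\) whose discounted run-off distribution is supported in \(\{0,\ldots,T\}\). For the upper bound, take \(SCR_0=1\) and \(SCR_t=0\) for \(t\geq1\). Then \(\tau\equiv0\), so \(\mathbb{E}[\lambda_\tau]=\lambda_0=1\) and the ratio equals \(\kappa\). For the lower bound, take \(SCR_T=1\) and all other \(SCR_t=0\). Then \(\tau\equiv T\), so the ratio equals \(\kappa\lambda_T\). Both paths give \(A=D_0\) or \(A=D_T\), which is strictly positive and finite because \(r_{t+1}>-1\), so the standing assumption holds.

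The only point I expect to need care is whether the degenerate single-date profiles are acceptable witnesses. If the intended notion of sharpness requires SCR paths that are strictly positive on \(\{0,\ldots,T\}\), the bounds are attained only in the limit. In that case I will place mass \(1-\varepsilon\) at the extreme date and spread \(\varepsilon\) over the remaining dates. This gives ratios within \(\kappa\varepsilon\) of the bound, so the bound is the supremum or infimum over that class. I will also note when equality is possible at all. It holds only when all mass sits where \(\lambda_t=\lambda_0\) or \(\lambda_t=\lambda_T\), respectively; since \(\lambda\) is constant on \(t\geq17\), for \(T>17\) the lower bound is attained by any profile supported in \(\{17,\ldots,T\}\).
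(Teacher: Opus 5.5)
Your proposal is correct and follows the paper's proof. Monotonicity gives $\lambda_T\le\lambda_\tau\le 1$, you take expectations and apply the exact representation, and point masses of discounted SCR at $t=0$ and $t=T$ show sharpness; your extra checks on admissibility ($0<A<\infty$) and on the equality cases in the floor region are accurate refinements that the paper leaves implicit.
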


\begin{proof}
	Since \(0\leq\tau\leq T\) almost surely,
	
	\[
	\lambda_T
	\leq
	\lambda_\tau
	\leq
	1.
	\]
	
	Taking expectations and applying Proposition \ref{ExactRepresentationProp} gives the stated inequalities. The upper bound for the ratio is attained when all discounted SCR mass is concentrated at \(t=0\), while the lower bound is attained when all mass is concentrated at \(t=T\).
\end{proof}

Since

\[
0.50
\leq
\lambda_t
\leq
1
\]

for every \(t\), the horizon-specific result yields portfolio-independent bounds.

\begin{corollary}[Universal bounds]
	\label{UniversalBoundsCor}
	
	For every admissible projected SCR profile,
	
	\begin{equation}
		\begin{aligned}
			\frac{19}{48}
			&\leq
			\frac{RM^{\mathrm{new}}}
			{RM^{\mathrm{old}}}
			\leq
			\frac{19}{24},
			\\[1mm]
			\frac{5}{24}
			&\leq
			\Delta
			\leq
			\frac{29}{48}.
		\end{aligned}
		\label{UniversalBounds}
	\end{equation}
	
	Thus, the mechanical reduction lies between approximately \(20.83\%\) and \(60.42\%\).
\end{corollary}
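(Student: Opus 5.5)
The plan is to derive Corollary \ref{UniversalBoundsCor} directly from Proposition \ref{ExactRepresentationProp}, using the uniform pointwise bounds on the lambda sequence instead of a finite horizon. For every integer \(t\geq0\), definition \eqref{LambdaFactor} gives \(0.50\leq\lambda_t\leq1\). The upper bound is attained at \(t=0\). The lower bound holds because of the floor, and it is attained for every \(t\geq17\).

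Since \(\tau\) is a genuine probability distribution, taking expectations yields \(0.50\leq\mathbb{E}[\lambda_\tau]\leq1\). The normalization is legitimate because \(0<A<\infty\) is assumed throughout, so the \(p_t\) are nonnegative and sum to one. This argument covers infinite-support profiles as well, so I will not pass through Proposition \ref{HorizonBoundsProp}.

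The step that needs most care is exactly this distinction. For a finite horizon \(T\) one could simply note \(\lambda_T\geq0.50\) and invoke Proposition \ref{HorizonBoundsProp}. For general admissible profiles with unbounded support, however, the direct expectation argument is the cleaner route, and it is the one I will write out.

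Multiplying by \(\kappa=19/24\) then gives
\[
\frac{19}{48}\leq\frac{RM^{\mathrm{new}}}{RM^{\mathrm{old}}}\leq\frac{19}{24}.
\]
Using \(\Delta=1-\kappa L\), this translates into
\[
\frac{5}{24}\leq\Delta\leq1-\frac{19}{48}=\frac{29}{48}.
\]
The decimal values follow: \(5/24\simeq20.83\%\) and \(29/48\simeq60.42\%\).

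For completeness I will also remark on attainment, even though the statement does not demand it. The upper bound on the ratio is attained when all discounted mass sits at \(t=0\). The lower bound is attained when all mass sits at any \(t\geq17\), for instance \(T=17\) in Proposition \ref{HorizonBoundsProp}, since \(\lambda_{17}=0.50\). Both universal bounds are therefore sharp. I expect no real obstacle beyond checking the arithmetic of the fractions.
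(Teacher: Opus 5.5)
Your proof is correct and uses the same ingredients as the paper: the pointwise bounds $0.50\leq\lambda_t\leq1$ combined with the exact representation of Proposition~\ref{ExactRepresentationProp}. The paper derives the corollary from the horizon-specific Proposition~\ref{HorizonBoundsProp}, which formally assumes finite support, so your direct expectation argument is slightly cleaner because it also covers admissible profiles with infinite support without an extra step.
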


The upper bound on \(\Delta\) combines the \(50\%\) floor on the lambda factor with the simultaneous reduction in the prescribed cost-of-capital rate. Accordingly, the floor constrains the attenuation applied to future SCRs, while the total percentage reduction reflects both components of the revised calibration.

If the capital run-off terminates before the floor becomes operative, the horizon-specific bound is tighter. For \(T<17\),

\[
\lambda_T=0.96^T,
\]

and hence

\[
\Delta
\leq
1-
\frac{19}{24}0.96^T.
\]

\subsection{Run-off timing under stochastic dominance}
\label{RunoffStochasticDominance}

The horizon bounds describe extreme temporal concentrations of discounted capital. Comparisons between general run-off profiles follow directly from stochastic ordering.

Consider two portfolios \(A\) and \(B\), and construct their discounted SCR timing variables \(\tau_A\) and \(\tau_B\) from their respective normalized profiles as in Definition \ref{SCRRunoffDefinition}.

\begin{proposition}[Later SCR run-off]
	\label{LaterSCRRunoffProp}
	
	If
	
	\[
	\tau_A
	\leq_{\mathrm{st}}
	\tau_B,
	\]
	
	then
	
	\[
	\mathbb{E}[\lambda_{\tau_B}]
	\leq
	\mathbb{E}[\lambda_{\tau_A}],
	\]
	
	and therefore
	
	\[
	\frac{RM_B^{\mathrm{new}}}
	{RM_B^{\mathrm{old}}}
	\leq
	\frac{RM_A^{\mathrm{new}}}
	{RM_A^{\mathrm{old}}},
	\qquad
	\Delta_B
	\geq
	\Delta_A.
	\]
\end{proposition}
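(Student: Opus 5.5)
The plan is to reduce the statement to the standard characterization of the usual stochastic order: if \(X\leq_{\mathrm{st}}Y\), then \(\mathbb{E}[g(X)]\leq\mathbb{E}[g(Y)]\) for every non-decreasing function \(g\) for which the expectations exist. Equivalently, \(\mathbb{E}[h(X)]\geq\mathbb{E}[h(Y)]\) for every non-increasing \(h\). I would apply this with \(h(t)=\lambda_t=\max(0.96^t,0.50)\), which is non-increasing in \(t\) and bounded in \([0.50,1]\). All expectations are therefore finite, and no integrability issue arises even for infinite-horizon profiles. The proof then concludes with \(\mathbb{E}[\lambda_{\tau_B}]\leq\mathbb{E}[\lambda_{\tau_A}]\).

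To keep the argument self-contained in the discrete setting, I would verify the inequality directly by summation by parts rather than only cite Shaked and Shanthikumar (2007). Write \(\bar F_X(k)=\mathbb{P}(X\geq k)\), and use the representation
\[
\lambda_t=\lambda_0-\sum_{k=1}^{t}(\lambda_{k-1}-\lambda_k),
\]
in which each increment \(\lambda_{k-1}-\lambda_k\geq0\) and only finitely many increments are nonzero (for \(k\leq17\)). Taking expectations and exchanging the finite sum with the expectation gives
\[
\mathbb{E}[\lambda_X]=1-\sum_{k\geq1}(\lambda_{k-1}-\lambda_k)\,\mathbb{P}(X\geq k).
\]
Since \(\tau_A\leq_{\mathrm{st}}\tau_B\) means \(\mathbb{P}(\tau_A\geq k)\leq\mathbb{P}(\tau_B\geq k)\) for all \(k\), and the weights are non-negative, the inequality between the two expectations follows term by term.

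Finally, I would invoke Proposition \ref{ExactRepresentationProp} for each portfolio separately, which gives \(RM_i^{\mathrm{new}}/RM_i^{\mathrm{old}}=\kappa\,\mathbb{E}[\lambda_{\tau_i}]\) for \(i\in\{A,B\}\). Multiplying by \(\kappa>0\) preserves the inequality, and \(\Delta_i=1-\kappa\,\mathbb{E}[\lambda_{\tau_i}]\) reverses it, which yields \(\Delta_B\geq\Delta_A\). The only step needing any care is the exchange of sum and expectation in the summation by parts. It is harmless here because the increments of \(\lambda\) vanish beyond \(t=17\), so the sum is finite. I therefore do not expect a genuine obstacle.
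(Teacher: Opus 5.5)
Your proposal is correct and follows the same route as the paper: apply the monotone-function characterization of the usual stochastic order to the bounded non-increasing sequence \(\lambda_t\), then pass to the Risk Margin ratio and to \(\Delta\) through Proposition~\ref{ExactRepresentationProp}. Your tail-sum identity \(\mathbb{E}[\lambda_X]=1-\sum_{k=1}^{17}(\lambda_{k-1}-\lambda_k)\,\mathbb{P}(X\geq k)\) makes self-contained the step the paper cites from Shaked and Shanthikumar, and it is valid because the increments vanish for \(k\geq18\).
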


\begin{proof}
	Usual stochastic order implies
	
	\[
	\mathbb{E}[f(\tau_A)]
	\geq
	\mathbb{E}[f(\tau_B)]
	\]
	
	for every bounded non-increasing function \(f\). Since \(t\mapsto\lambda_t\) is bounded and non-increasing, choosing \(f(t)=\lambda_t\) gives the result. See Shaked and Shanthikumar (2007)\cite{shaked2007stochastic}.
\end{proof}

A later discounted SCR run-off therefore produces a weakly larger percentage reduction under the revised calibration. The ordering is defined by the timing of discounted capital and may differ from an ordering based on contractual maturity. Portfolios with similar liability maturities may consequently experience different reform effects when their non-hedgeable risks release capital at different speeds.

Equality occurs, in particular, when timing changes are confined to dates over which the lambda factor is constant.

\subsection{EIOPA Best Estimate approximation}
\label{EIOPAMethod2}

Projecting future SCRs is one of the operationally demanding components of the Risk Margin calculation. Dreksler et al. (2015)\cite{dreksler2015solvency} discuss a hierarchy of simplified approaches, including proportional projections based on the run-off of Best Estimate technical provisions. Ohlsson and Lauzeningks (2009)\cite{ohlsson2009oneyear} similarly relate simplified cost-of-capital calculations to the run-off of insurance reserves.

The revised EIOPA Guidelines retain proportional Best Estimate projection as Method 2 \cite{eiopa2026valuation}. In the notation used here,

\begin{equation}
	SCR_t^{\mathrm{M2}}
	=
	SCR_0
	\frac{BE_t}{BE_0},
	\label{Method2SCR}
\end{equation}

where \(BE_t\) denotes the projected Best Estimate technical provisions net of reinsurance.

The approximation relies on sufficient stability of the risk profile over time, including the composition of risks and sub-risks and the relationship between their main drivers and the net Best Estimate. EIOPA also specifies that Method 2 is inappropriate when negative Best Estimate values arise at the valuation date or at subsequent dates \cite{eiopa2026valuation}.

Best Estimate and SCR nevertheless represent different objects. The former represents the probability-weighted present value of future liability cash flows, while the latter measures adverse one-year changes in the solvency position. Their run-off patterns can consequently diverge. Dreksler et al. (2015)\cite{dreksler2015solvency} distinguish risk components that may evolve broadly in line with Best Estimate provisions from components, such as premium or catastrophe risk, that may follow different dynamics. England et al. (2019)\cite{england2019lifetime} likewise construct alternative capital profiles based on the Best Estimate, standard deviation and \(99.5\%\) Value-at-Risk of future claims development results.

\subsection{Capital intensity and Best Estimate proportionality}
\label{CapitalIntensityMethod2}

Assume throughout this subsection that

\[
BE_t>0
\]

at every relevant date and that the required discounted sums are finite. Define the capital intensity

\begin{equation}
	h_t
	=
	\frac{SCR_t}{BE_t}.
	\label{CapitalIntensity}
\end{equation}

Exact Method 2 proportionality corresponds to

\[
h_t=h_0
\]

for every \(t\).

Let

\[
B
=
\sum_{t\geq0}
BE_tD_t
\]

and define the discounted Best Estimate timing distribution by

\[
\pi_t
=
\frac{BE_tD_t}{B}.
\]

Under \(\mathbb{E}_\pi\), the time index \(\tau\) has probability mass function \((\pi_t)\). Expectations and covariances under this distribution are denoted by \(\mathbb{E}_\pi\) and \(\operatorname{Cov}_\pi\), respectively. The full Risk Margins satisfy

\[
RM_{\mathrm{full}}^{\mathrm{old}}
=
c_0B
\mathbb{E}_{\pi}[h_\tau]
\]

and

\[
RM_{\mathrm{full}}^{\mathrm{new}}
=
c_1B
\mathbb{E}_{\pi}[\lambda_\tau h_\tau].
\]

Under Method 2,

\[
RM_{\mathrm{M2}}^{\mathrm{old}}
=
c_0Bh_0,
\qquad
RM_{\mathrm{M2}}^{\mathrm{new}}
=
c_1Bh_0
\mathbb{E}_{\pi}[\lambda_\tau].
\]

The corresponding new-to-old Risk Margin ratios are

\begin{equation}
	\begin{aligned}
		R_{\mathrm{full}}
		&:=
		\frac{RM_{\mathrm{full}}^{\mathrm{new}}}
		{RM_{\mathrm{full}}^{\mathrm{old}}}
		=
		\kappa
		\frac{
			\mathbb{E}_{\pi}[\lambda_\tau h_\tau]
		}{
			\mathbb{E}_{\pi}[h_\tau]
		},
		\\[1mm]
		R_{\mathrm{M2}}
		&:=
		\frac{RM_{\mathrm{M2}}^{\mathrm{new}}}
		{RM_{\mathrm{M2}}^{\mathrm{old}}}
		=
		\kappa
		\mathbb{E}_{\pi}[\lambda_\tau].
	\end{aligned}
	\label{FullMethod2Ratios}
\end{equation}

Define the corresponding relative reductions by

\[
\Delta_{\mathrm{full}}
=
1-R_{\mathrm{full}},
\qquad
\Delta_{\mathrm{M2}}
=
1-R_{\mathrm{M2}}.
\]

The difference between these ratios has an exact covariance representation.

\begin{proposition}[Capital-intensity interaction]
	\label{CapitalIntensityProp}
	
	Under the preceding assumptions,
	
	\begin{equation}
		R_{\mathrm{full}}
		-
		R_{\mathrm{M2}}
		=
		\kappa
		\frac{
			\operatorname{Cov}_{\pi}
			(\lambda_\tau,h_\tau)
		}{
			\mathbb{E}_{\pi}[h_\tau]
		}.
		\label{CapitalIntensityCovariance}
	\end{equation}
\end{proposition}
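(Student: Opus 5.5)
The argument is a direct algebraic manipulation of the two ratios in \eqref{FullMethod2Ratios}, using the covariance identity \(\mathbb{E}_\pi[XY]=\mathbb{E}_\pi[X]\mathbb{E}_\pi[Y]+\operatorname{Cov}_\pi(X,Y)\) under the discounted Best Estimate timing distribution \((\pi_t)\). First I would confirm that every quantity involved is well defined. Since \(BE_t>0\) and \(D_t>0\), the weights \(\pi_t\) form a genuine probability mass function. The sequence \(\lambda_t\) is bounded in \([0.50,1]\), so \(\mathbb{E}_\pi[\lambda_\tau]\) is finite. The standing finiteness assumption on the discounted sums gives \(\mathbb{E}_\pi[h_\tau]=A/B<\infty\), where \(A=\sum_t SCR_tD_t\) as in Section \ref{RegulatorySetting}. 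It also gives \(\mathbb{E}_\pi[h_\tau]>0\), because \(A>0\). Finally, \(0\le\lambda_\tau h_\tau\le h_\tau\) shows that \(\mathbb{E}_\pi[\lambda_\tau h_\tau]\) is finite, so \(\operatorname{Cov}_\pi(\lambda_\tau,h_\tau)\) is well defined even though \(h_\tau\) need not be square integrable.

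Next I would apply the identity with \(X=\lambda_\tau\) and \(Y=h_\tau\):
\[
\mathbb{E}_\pi[\lambda_\tau h_\tau]
=
\mathbb{E}_\pi[\lambda_\tau]\,\mathbb{E}_\pi[h_\tau]
+
\operatorname{Cov}_\pi(\lambda_\tau,h_\tau).
\]
Dividing by \(\mathbb{E}_\pi[h_\tau]>0\) and multiplying by \(\kappa\) gives
\[
R_{\mathrm{full}}
=
\kappa\,\mathbb{E}_\pi[\lambda_\tau]
+
\kappa\,\frac{\operatorname{Cov}_\pi(\lambda_\tau,h_\tau)}{\mathbb{E}_\pi[h_\tau]}.
\]
The first term on the right is exactly \(R_{\mathrm{M2}}\) from \eqref{FullMethod2Ratios}. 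Subtracting it yields \eqref{CapitalIntensityCovariance}.

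For completeness, I would also briefly verify the representations of \(RM_{\mathrm{full}}\) and \(RM_{\mathrm{M2}}\) that the statement takes as given. Writing \(SCR_tD_t=h_tBE_tD_t=Bh_t\pi_t\) and summing gives both full Risk Margins. Substituting \(SCR_t^{\mathrm{M2}}=h_0BE_t\) gives both Method 2 Risk Margins.

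The only delicate point is integrability. The covariance must be defined without assuming second moments of \(h_\tau\), which is handled by the boundedness of \(\lambda\) in the first step. A related issue arises if the support of \((\pi_t)\) is infinite. In that case I would define the covariance as \(\mathbb{E}_\pi[(\lambda_\tau-\mathbb{E}_\pi\lambda_\tau)h_\tau]\), which is absolutely convergent because \(\lambda\) is bounded, and then check that it equals \(\mathbb{E}_\pi[\lambda_\tau h_\tau]-\mathbb{E}_\pi[\lambda_\tau]\mathbb{E}_\pi[h_\tau]\) by linearity of absolutely convergent series.
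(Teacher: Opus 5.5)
Your proof is correct and follows the paper's own argument: expand \(\mathbb{E}_\pi[\lambda_\tau h_\tau]\) via the covariance identity, divide by \(\mathbb{E}_\pi[h_\tau]\), multiply by \(\kappa\), and recognise the first term as \(R_{\mathrm{M2}}\). Your extra checks are a sound supplement that the paper leaves implicit: boundedness of \(\lambda\) keeps the covariance well defined without second moments of \(h_\tau\), and \(\mathbb{E}_\pi[h_\tau]=A/B>0\) justifies the division.
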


\begin{proof}
	Using
	
	\[
	\mathbb{E}_{\pi}[\lambda_\tau h_\tau]
	=
	\mathbb{E}_{\pi}[\lambda_\tau]
	\mathbb{E}_{\pi}[h_\tau]
	+
	\operatorname{Cov}_{\pi}(\lambda_\tau,h_\tau),
	\]
	
	the first identity in \eqref{FullMethod2Ratios} gives
	
	\[
	R_{\mathrm{full}}
	=
	R_{\mathrm{M2}}
	+
	\kappa
	\frac{
		\operatorname{Cov}_{\pi}(\lambda_\tau,h_\tau)
	}{
		\mathbb{E}_{\pi}[h_\tau]
	},
	\]
	
	which proves the result.
\end{proof}

The covariance in \eqref{CapitalIntensityCovariance} isolates the interaction between departures from Best Estimate proportionality and the revised temporal weighting. The relative effect of the reform is governed by the association between capital intensity \(SCR_t/BE_t\) and the lambda sequence.

\begin{corollary}[Monotone capital intensity]
	\label{MonotoneCapitalIntensityCor}
	
	If \(h_t\) is non-decreasing in \(t\), then
	
	\[
	R_{\mathrm{full}}
	\leq
	R_{\mathrm{M2}},
	\qquad
	\Delta_{\mathrm{full}}
	\geq
	\Delta_{\mathrm{M2}}.
	\]
	
	If \(h_t\) is non-increasing, the inequalities are reversed.
\end{corollary}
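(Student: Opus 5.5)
The plan is to reduce the statement to a sign condition on the covariance in Proposition \ref{CapitalIntensityProp}. Because \(\kappa>0\), and because \(\mathbb{E}_{\pi}[h_\tau]>0\) (it equals \(A/B\) with \(A>0\) by the standing assumption), the sign of \(R_{\mathrm{full}}-R_{\mathrm{M2}}\) equals the sign of \(\operatorname{Cov}_{\pi}(\lambda_\tau,h_\tau)\). The statements about \(\Delta_{\mathrm{full}}\) and \(\Delta_{\mathrm{M2}}\) then follow from \(\Delta=1-R\).

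The key step is a Chebyshev-type (Harris/FKG on a chain) covariance inequality. If \(f\) is non-increasing and \(g\) is non-decreasing on the totally ordered support, then \(\operatorname{Cov}(f(\tau),g(\tau))\leq0\). I would prove this directly with an independent copy \(\tau'\) of \(\tau\) under \(\pi\):
\[
2\operatorname{Cov}_{\pi}(\lambda_\tau,h_\tau)=\mathbb{E}_{\pi}\big[(\lambda_\tau-\lambda_{\tau'})(h_\tau-h_{\tau'})\big].
\]
For \(t\mapsto\lambda_t\) non-increasing and \(t\mapsto h_t\) non-decreasing, each product \((\lambda_t-\lambda_s)(h_t-h_s)\) is \(\leq0\), since the two factors have opposite (weak) signs whenever \(t\neq s\). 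This gives \(\operatorname{Cov}_{\pi}\leq0\), hence \(R_{\mathrm{full}}\leq R_{\mathrm{M2}}\) and \(\Delta_{\mathrm{full}}\geq\Delta_{\mathrm{M2}}\).

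For the reversed case (\(h\) non-increasing), the two functions are comonotone, so every product is \(\geq0\) and the inequalities flip.

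The only mild obstacle is integrability. The double-sum identity needs \(\mathbb{E}_{\pi}[h_\tau]<\infty\) and \(\mathbb{E}_{\pi}[\lambda_\tau h_\tau]<\infty\). The first is exactly finiteness of \(A\), since \(\sum_t\pi_th_t=A/B\), and the second follows from \(0<\lambda_t\leq1\). With \(\lambda\) bounded, the covariance is well defined and the expansion is justified by absolute convergence.
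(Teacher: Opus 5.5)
Your proposal is correct and follows the paper's route: you read the sign of \(R_{\mathrm{full}}-R_{\mathrm{M2}}\) off Proposition~\ref{CapitalIntensityProp}, and \(\operatorname{Cov}_{\pi}(\lambda_\tau,h_\tau)\) is non-positive (non-negative) because \(\lambda_t\) and \(h_t\) are oppositely monotone (comonotone). The paper simply asserts this covariance sign, whereas you also prove it via the independent-copy (Chebyshev) identity and check \(\mathbb{E}_{\pi}[h_\tau]=A/B>0\) and integrability; these are welcome details that do not change the argument.
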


\begin{proof}
	Since \((\lambda_t)\) is non-increasing, a non-decreasing sequence \((h_t)\) satisfies
	
	\[
	\operatorname{Cov}_{\pi}
	(\lambda_\tau,h_\tau)
	\leq0.
	\]
	
	Proposition \ref{CapitalIntensityProp} then gives the result. The argument is reversed when \(h_t\) is non-increasing.
\end{proof}

When capital intensity rises over time, the full SCR profile places relatively more weight on later dates than its Method 2 counterpart. Method 2 then understates the percentage reduction generated by the reform. A declining capital intensity produces the opposite ordering.

These inequalities characterize the relative reform effect. The approximation error in the absolute Risk Margin additionally depends on the level of capital intensity. Specifically,

\begin{equation}
	RM_{\mathrm{full}}^{\mathrm{new}}
	-
	RM_{\mathrm{M2}}^{\mathrm{new}}
	=
	c_1B
	\left[
	\mathbb{E}_{\pi}[\lambda_\tau]
	\left(
	\mathbb{E}_{\pi}[h_\tau]-h_0
	\right)
	+
	\operatorname{Cov}_{\pi}(\lambda_\tau,h_\tau)
	\right].
	\label{AbsoluteMethod2Error}
\end{equation}
The absolute error therefore combines a level component and a timing component, with its sign determined jointly by the two terms.

\subsection{Mean run-off time and dispersion}
\label{MeanRunoffDispersion}

Assume that the discounted SCR timing variable has a finite first moment and define

\[
\mu
=
\mathbb{E}[\tau],
\]
referred to as the \emph{mean discounted-SCR run-off time}. This quantity is distinct from EIOPA's duration-based Risk Margin approximation (Method 3), which relies on an adjusted modified duration of insurance liabilities and serves a different approximation purpose \cite{eiopa2026valuation}.

The mean summarizes the location of the discounted capital run-off, while the reform effect also depends on its temporal dispersion. Extend the regulatory factor to \(s\geq0\) by

\[
\lambda(s)
=
\max\left(0.96^s,0.50\right).
\]

Since \(s\mapsto0.96^s\) and the constant function \(0.50\) are convex, their pointwise maximum is convex. Hence \(\lambda\) is convex and non-increasing.

\begin{proposition}[Dispersion at fixed mean]
	\label{ConvexOrderProp}
	
	Let \(\tau_A\) and \(\tau_B\) have the same finite mean. If
	
	\[
	\tau_A \leq_{\mathrm{cx}} \tau_B
	\]
	
	in convex order, then
	
	\[
	\mathbb{E}[\lambda(\tau_A)]
	\leq
	\mathbb{E}[\lambda(\tau_B)].
	\]
	
	Consequently,
	
	\[
	\frac{RM_A^{\mathrm{new}}}
	{RM_A^{\mathrm{old}}}
	\leq
	\frac{RM_B^{\mathrm{new}}}
	{RM_B^{\mathrm{old}}},
	\qquad
	\Delta_A
	\geq
	\Delta_B.
	\]
\end{proposition}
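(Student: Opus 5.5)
The plan is to apply the defining property of convex order directly to the regulatory function \(\lambda(\cdot)\), and then pass to the Risk Margin ratios through Proposition \ref{ExactRepresentationProp}.

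Recall that \(\tau_A\leq_{\mathrm{cx}}\tau_B\) means \(\mathbb{E}[\phi(\tau_A)]\leq\mathbb{E}[\phi(\tau_B)]\) for every convex \(\phi\) for which both expectations exist; see Shaked and Shanthikumar (2007)\cite{shaked2007stochastic}. Since both timing variables take values in \(\{0,1,\ldots\}\), only the values of \(\lambda\) on \([0,\infty)\) enter. I would extend \(\lambda\) to a convex function on all of \(\mathbb{R}\), for instance by setting \(\lambda(s)=\max(0.96^s,0.50)\) for every real \(s\). This function is again a maximum of two convex functions, so it is convex on \(\mathbb{R}\), and it agrees with the regulatory factor on the support of \(\tau_A\) and \(\tau_B\).

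Integrability is immediate from \(0.50\leq\lambda(\tau)\leq1\). The convex-order definition then gives \(\mathbb{E}[\lambda(\tau_A)]\leq\mathbb{E}[\lambda(\tau_B)]\).

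For the consequences, Proposition \ref{ExactRepresentationProp} gives \(RM^{\mathrm{new}}_i/RM^{\mathrm{old}}_i=\kappa\,\mathbb{E}[\lambda(\tau_i)]\) for \(i=A,B\). Multiplying the inequality by \(\kappa>0\) yields the ordering of the ratios, and \(\Delta_i=1-\kappa\,\mathbb{E}[\lambda(\tau_i)]\) reverses it to give \(\Delta_A\geq\Delta_B\).

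The only step needing care is the convexity and domain question, namely that the test function used in the convex-order definition is genuinely convex on a domain containing the supports. I would handle it through the real-line extension above, rather than relying on convexity on \([0,\infty)\) alone. The equal-mean hypothesis is not used beyond being implied by convex order. It serves to interpret the result as a pure dispersion effect at fixed mean run-off time.
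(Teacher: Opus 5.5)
Your proposal is correct and takes essentially the same route as the paper: apply the defining property of convex order to the convex function \(\lambda(s)=\max(0.96^s,0.50)\), which coincides with \(\lambda_\tau\) on integer-valued run-off times, and then pass to the Risk Margin ratios and reductions via Proposition~\ref{ExactRepresentationProp}. Extending \(\lambda\) to all of \(\mathbb{R}\) and checking integrability through \(0.50\leq\lambda\leq1\) only make explicit points the paper leaves implicit; they do not change the argument.
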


\begin{proof}
	Convex order implies
	
	\[
	\mathbb{E}[\varphi(\tau_A)]
	\leq
	\mathbb{E}[\varphi(\tau_B)]
	\]
	
	for every convex function \(\varphi\) for which the expectations exist. Taking \(\varphi=\lambda\) gives the first inequality, since \(\lambda(\tau)=\lambda_\tau\) for integer-valued \(\tau\). The remaining inequalities follow from Proposition \ref{ExactRepresentationProp}. See Shaked and Shanthikumar (2007)\cite{shaked2007stochastic}.
\end{proof}

At a fixed mean discounted-SCR run-off time, greater dispersion in the convex-order sense produces a smaller percentage reduction under the reform. Both mean timing and temporal dispersion therefore shape the effect of the lambda factor.

Equality arises, for example, when discounted SCR mass is redistributed exclusively among dates \(t\geq17\), where \(\lambda_t=0.50\).

\subsection{Sharp bounds conditional on mean run-off time}
\label{SharpMeanBounds}

Suppose

\[
\tau\in\{0,1,\ldots,T\},
\qquad
T\geq1,
\]

with

\[
\mathbb{E}[\tau]=\mu,
\qquad
0\leq\mu\leq T.
\]

If \(\mu=T\), then necessarily \(\tau=T\) almost surely and

\[
\mathbb{E}[\lambda_\tau]=\lambda_T.
\]

Consider therefore \(0\leq\mu<T\), and set

\[
m=\lfloor\mu\rfloor,
\qquad
\theta=\mu-m.
\]

\begin{theorem}[Sharp bounds with fixed mean and horizon]
	\label{SharpMeanBoundsThm}
	
	Under the preceding assumptions,
	
	\begin{equation}
		(1-\theta)\lambda_m
		+
		\theta\lambda_{m+1}
		\leq
		\mathbb{E}[\lambda_\tau]
		\leq
		\left(1-\frac{\mu}{T}\right)\lambda_0
		+
		\frac{\mu}{T}\lambda_T.
		\label{SharpLambdaBounds}
	\end{equation}
	
	The lower bound is attained by
	
	\[
	\mathbb{P}(\tau=m)=1-\theta,
	\qquad
	\mathbb{P}(\tau=m+1)=\theta,
	\]
	
	while the upper bound is attained by
	
	\[
	\mathbb{P}(\tau=0)
	=
	1-\frac{\mu}{T},
	\qquad
	\mathbb{P}(\tau=T)
	=
	\frac{\mu}{T}.
	\]
\end{theorem}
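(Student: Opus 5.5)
The plan rests on the convexity of the extended factor \(\lambda(s)=\max(0.96^s,0.50)\) established before Proposition \ref{ConvexOrderProp}, together with the fact that \(\lambda(t)=\lambda_t\) at integers. Both bounds are then statements about a convex function of a random variable supported on the finite set \(\{0,1,\ldots,T\}\) with prescribed mean \(\mu\). The lower bound is a discrete Jensen-type inequality. The upper bound is the chord inequality for convex functions on \([0,T]\).

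For the upper bound, I would write each \(t\in\{0,\ldots,T\}\) as the convex combination \(t=(1-t/T)\cdot0+(t/T)\cdot T\). Convexity of \(\lambda\) then gives
\[
\lambda_t\leq(1-t/T)\lambda_0+(t/T)\lambda_T .
\]
Taking expectations with \(\mathbb{E}[\tau]=\mu\) yields the right-hand inequality in \eqref{SharpLambdaBounds}. The stated two-point law on \(\{0,T\}\) has mean \(T\cdot\mu/T=\mu\) and is supported in \(\{0,\ldots,T\}\), so it is admissible. Its expectation of \(\lambda_\tau\) is exactly the chord value, so the bound is attained.

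For the lower bound, let \(g\) denote the piecewise-linear interpolation of \(\lambda\) through the integer points \((t,\lambda_t)\), \(t=0,\ldots,T\). Since \(\lambda\) is convex, its slopes between consecutive integers, \(\lambda_{t+1}-\lambda_t\), are non-decreasing in \(t\). Hence \(g\) is convex on \([0,T]\), and it agrees with \(\lambda\) on the support of \(\tau\). Jensen's inequality then gives
\[
\mathbb{E}[\lambda_\tau]=\mathbb{E}[g(\tau)]\geq g(\mu)=(1-\theta)\lambda_m+\theta\lambda_{m+1}.
\]
The last equality holds because \(\mu\in[m,m+1]\), with \(m+1\leq T\) since \(\mu<T\). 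An alternative I could use instead is the supporting line of \(g\) on the segment \([m,m+1]\): this line lies below every \(\lambda_t\) by slope monotonicity, and taking expectations gives the same bound. Attainment follows because the stated law on \(\{m,m+1\}\) has mean \(m+\theta=\mu\), lies in \(\{0,\ldots,T\}\), and evaluates to exactly \(g(\mu)\).

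The main obstacle is justifying that the interpolant \(g\) is convex. Convexity of \(\lambda\) on the reals does not by itself mean that \(\lambda\) restricted to the integers has non-decreasing increments, so I would verify this step explicitly. By the three-chord lemma, for a convex function the slope over \([t,t+1]\) is at most the slope over \([t+1,t+2]\). Concretely, the increments are \(-0.04\cdot0.96^t\), whose magnitude shrinks as \(t\) grows, then a partial step at the kink between \(t=16\) and \(t=17\), and zero from \(t=17\) onward. The edge case \(\theta=0\) needs no separate treatment, since both sides then reduce to \(\lambda_m\).
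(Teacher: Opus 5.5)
Your proof is correct and follows the paper's argument. Jensen's inequality for the convex piecewise-linear interpolant through $(t,\lambda_t)$ gives the lower bound, the chord inequality between the endpoints $0$ and $T$ gives the upper bound, and the adjacent-point and endpoint two-point laws show that both bounds are attained. One wording fix: convexity of $\lambda$ on the reals \emph{does} by itself imply non-decreasing integer increments (that is the three-chord lemma you then invoke), so drop the sentence claiming otherwise; your explicit check of the increments, including $\lambda_{17}-\lambda_{16}=0.50-0.96^{16}$, is a harmless confirmation.
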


\begin{proof}
	Let \(\widetilde{\lambda}\) denote the piecewise-linear interpolation through
	
	\[
	(t,\lambda_t),
	\qquad
	t=0,\ldots,T.
	\]
	
	The discrete sequence \((\lambda_t)\) is convex, so \(\widetilde{\lambda}\) is convex on \([0,T]\), with
	
	\[
	\widetilde{\lambda}(t)=\lambda_t
	\]
	
	at integer dates. Jensen's inequality therefore gives
	
	\[
	\mathbb{E}[\lambda_\tau]
	=
	\mathbb{E}[\widetilde{\lambda}(\tau)]
	\geq
	\widetilde{\lambda}(\mathbb{E}[\tau]).
	\]
	
	Since \(\mu=m+\theta\),
	
	\[
	\widetilde{\lambda}(\mu)
	=
	(1-\theta)\lambda_m
	+
	\theta\lambda_{m+1},
	\]
	
	and the adjacent-point distribution stated in the theorem attains this value.
	
	For the upper bound, convexity places \(\widetilde{\lambda}\) below the chord joining its endpoint values,
	
	\[
	\widetilde{\lambda}(t)
	\leq
	\left(1-\frac{t}{T}\right)\lambda_0
	+
	\frac{t}{T}\lambda_T.
	\]
	
	Taking expectations yields
	
	\[
	\mathbb{E}[\lambda_\tau]
	\leq
	\left(1-\frac{\mu}{T}\right)\lambda_0
	+
	\frac{\mu}{T}\lambda_T.
	\]
	
	The endpoint distribution stated in the theorem attains the upper bound.
\end{proof}

The extremal distributions provide a convex-order interpretation of the result. Among integer-valued distributions on \(\{0,\ldots,T\}\) with mean \(\mu\), concentration on the adjacent integers surrounding \(\mu\) minimizes dispersion, while concentration on the endpoints maximizes it. Related extremal bounds for expectations of convex functions arise in the Jensen and Edmundson-Madansky framework; see Huang et al. (1977)\cite{huang1977bounds}.

Combining \eqref{SharpLambdaBounds} with Proposition \ref{ExactRepresentationProp} gives
\begin{equation}
	\kappa
	\left[
	(1-\theta)\lambda_m+\theta\lambda_{m+1}
	\right]
	\leq
	\frac{RM^{\mathrm{new}}}{RM^{\mathrm{old}}}
	\leq
	\kappa
	\left[
	\left(1-\frac{\mu}{T}\right)\lambda_0
	+
	\frac{\mu}{T}\lambda_T
	\right].
	\label{SharpRatioBounds}
\end{equation}
The corresponding bounds for \(\Delta\) follow by subtracting from one and reversing the inequalities.

\subsection{Geometric run-off benchmark}
\label{GeometricRunoff}

A geometric capital profile provides a one-parameter benchmark for isolating run-off persistence. Let

\begin{equation}
	SCR_t(q)
	=
	SCR_0q^t,
	\qquad
	t=0,\ldots,T,
	\qquad
	0\leq q<1.
	\label{GeometricSCR}
\end{equation}

Higher values of \(q\) correspond to a slower release of regulatory capital. For a finite horizon \(T\) and fixed positive discount factors,

\[
p_t(q)
=
\frac{q^tD_t}
{\sum_{s=0}^{T}q^sD_s}.
\]

Expectations and covariances under this distribution are denoted by \(\mathbb{E}_q\) and \(\operatorname{Cov}_q\), respectively.

\begin{proposition}[Persistence monotonicity]
	\label{PersistenceMonotonicityProp}
	
	The relative reduction
	
	\[
	\Delta(q)
	=
	1-
	\frac{RM^{\mathrm{new}}(q)}
	{RM^{\mathrm{old}}(q)}
	\]
	
	is non-decreasing in \(q\).
\end{proposition}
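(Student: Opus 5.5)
By Proposition~\ref{ExactRepresentationProp}, \(\Delta(q)=1-\kappa\,\mathbb{E}_q[\lambda_\tau]\). It therefore suffices to show that \(q\mapsto\mathbb{E}_q[\lambda_\tau]\) is non-increasing on \([0,1)\).

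There are two natural routes, and I would present the ordering argument as the main one, with the derivative computation as a check.

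\emph{Ordering route.} Fix \(0\le q<q'<1\). On the common support \(\{t\le T: D_t>0\}\), which is all of \(\{0,\ldots,T\}\), the ratio of masses is
\[
\frac{p_t(q)}{p_t(q')}=\frac{\sum_s q'^sD_s}{\sum_s q^sD_s}\,(q/q')^t .
\]
This is non-increasing in \(t\). Hence \(\tau_q\leq_{\mathrm{lr}}\tau_{q'}\), and therefore \(\tau_q\leq_{\mathrm{st}}\tau_{q'}\). Proposition~\ref{LaterSCRRunoffProp} then gives \(\mathbb{E}[\lambda_{\tau_{q'}}]\le\mathbb{E}[\lambda_{\tau_q}]\) directly.

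\emph{Derivative route.} For \(q>0\), the normalizing sum \(Z(q)=\sum_s q^sD_s\) is a polynomial. Differentiating the quotient gives
\[
\frac{d}{dq}\mathbb{E}_q[\lambda_\tau]=\frac1q\operatorname{Cov}_q(\lambda_\tau,\tau),
\]
since \(\frac{d}{dq}\log(q^tD_t)=t/q\). Because \(\lambda_t\) is non-increasing and \(t\) is increasing, the two are comonotone in reverse. By Chebyshev's covariance (association) inequality for oppositely monotone functions of the same random variable, this covariance is \(\le0\).

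The main obstacle is the endpoint \(q=0\), together with the convention \(0^0=1\). At \(q=0\) all mass sits at \(t=0\), so the likelihood-ratio comparison with common support fails. I would handle this directly: \(\mathbb{E}_0[\lambda_\tau]=\lambda_0=1\ge\mathbb{E}_{q}[\lambda_\tau]\) for every \(q\), so \(\Delta(0)=1-\kappa\) is the minimum. This is consistent with Proposition~\ref{HorizonBoundsProp}. Alternatively, \(\tau_0=0\leq_{\mathrm{st}}\tau_q\) holds trivially, so Proposition~\ref{LaterSCRRunoffProp} applies without needing the likelihood-ratio step.

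I would also note that monotonicity is only weak. Strictness holds when \(T\ge1\), since then \(\lambda\) is non-constant on the support. The dependence on the fixed discount factors cancels in the likelihood ratio, so the result holds for any positive \(D_t\).
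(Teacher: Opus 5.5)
Your proof is correct, and your derivative ``check'' is essentially the paper's own proof. The paper computes \(\partial\,\mathbb{E}_q[\lambda_\tau]/\partial\log q=\operatorname{Cov}_q(\lambda_\tau,\tau)\leq0\) for \(q>0\) and handles \(q=0\) by continuity. That is your identity \(\frac{d}{dq}\mathbb{E}_q[\lambda_\tau]=q^{-1}\operatorname{Cov}_q(\lambda_\tau,\tau)\) in different form.

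Your main route is genuinely different. You note that the geometric family is an exponential tilt of the fixed discount weights, so \(p_t(q)/p_t(q')\propto(q/q')^t\) is non-increasing in \(t\). This gives \(\tau_q\leq_{\mathrm{lr}}\tau_{q'}\), and you then reuse Proposition~\ref{LaterSCRRunoffProp} instead of differentiating.

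What your route buys:
\begin{itemize}
\item a global comparison of any two persistence levels, with no differentiation or continuity step;
\item a direct treatment of \(q=0\) through \(\tau_0\equiv0\leq_{\mathrm{st}}\tau_q\);
\item an ordering of the run-off distributions themselves, so the conclusion visibly uses nothing about \(\lambda\) beyond its monotonicity.
\end{itemize}
The same tilt argument, with \(-x\) in the role of \(\log q\), would also give the sign in Proposition~\ref{ReformRatioRateProp}.

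What the paper's covariance route buys is an explicit local sensitivity formula, matching the identities of Section~\ref{InterestRateSensitivity}. It is also what supports your strictness remark. For \(T\geq1\) and \(q\in(0,1)\), the tilted distribution has full support on \(\{0,\ldots,T\}\) and \(\lambda_0>\lambda_1\), so the Chebyshev covariance inequality is strict. The stochastic-order step alone would not give strictness, because \(\lambda\) is flat from \(t=17\) onward.
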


\begin{proof}
	For \(q>0\),
	
	\[
	\frac{\partial}
	{\partial\log q}
	\mathbb{E}_q[\lambda_\tau]
	=
	\operatorname{Cov}_q(\lambda_\tau,\tau).
	\]
	
	Since \(t\mapsto\lambda_t\) is non-increasing and \(t\mapsto t\) is increasing,
	
	\[
	\operatorname{Cov}_q(\lambda_\tau,\tau)
	\leq0.
	\]
	
	Hence
	
	\[
	\frac{\partial}
	{\partial q}
	\mathbb{E}_q[\lambda_\tau]
	\leq0,
	\]
	
	and Proposition \ref{ExactRepresentationProp} gives
	
	\[
	\Delta'(q)\geq0.
	\]
	
	The result at \(q=0\) follows by continuity.
\end{proof}

Under a flat annual effective risk-free rate \(r\), define

\[
z
=
\frac{q}{1+r},
\qquad
G_n(z)
=
\sum_{j=0}^{n}z^j.
\]

The corresponding Risk Margins are

\begin{equation}
	\begin{aligned}
		RM^{\mathrm{old}}
		&=
		\frac{c_0SCR_0}{1+r}
		G_T(z),
		\\[1mm]
		RM^{\mathrm{new}}
		&=
		\begin{cases}
			\dfrac{c_1SCR_0}{1+r}
			G_T(0.96z),
			& T\leq16,
			\\[3mm]
			\dfrac{c_1SCR_0}{1+r}
			\left[
			G_{16}(0.96z)
			+
			0.50\,z^{17}G_{T-17}(z)
			\right],
			& T\geq17.
		\end{cases}
	\end{aligned}
	\label{GeometricClosedForms}
\end{equation}

These expressions provide the controlled benchmark used later to quantify the effect of capital persistence while holding scale and other run-off features fixed.
\section{Interest-rate sensitivity}
\label{InterestRateSensitivity}

A longstanding criticism of the Solvency II Risk Margin concerns its sensitivity to interest rates, particularly for long-duration insurance liabilities. Pelkiewicz et al. (2020)\cite{pelkiewicz2020review} discuss this issue extensively, while Waszink (2024)\cite{waszink2024comments} examines the direct sensitivity of alternative Risk Margin specifications under common projected SCR profiles. The revised EIOPA Guidelines identify lower interest-rate sensitivity as one of the intended effects of the new time-dependent factor \cite{eiopa2026valuation}.

The analysis separates the direct discounting channel, evaluated under a fixed projected \(SCR_t\) sequence, from the additional response generated when interest-rate movements alter future SCRs. The two channels are examined in turn.

\subsection{Direct discount-rate sensitivity}
\label{DirectRateSensitivity}

Let \(x\) denote a parallel continuously compounded shift applied to a baseline discount structure. Define

\begin{equation}
	D_t(x)
	=
	D_t(0)
	\exp\!\left(-(t+1)x\right),
	\qquad t\geq0.
	\label{ShiftedDiscount}
\end{equation}

Set

\[
u_t=t+1.
\]

The previous and revised Risk Margins become

\[
RM^{\mathrm{old}}(x)
=
c_0
\sum_{t\geq0}
SCR_tD_t(0)e^{-u_tx}
\]

and

\[
RM^{\mathrm{new}}(x)
=
c_1
\sum_{t\geq0}
\lambda_tSCR_tD_t(0)e^{-u_tx}.
\]

Assume that the relevant sums and first-moment sums are finite and that termwise differentiation is justified in a neighbourhood of the value of \(x\) under consideration. Define the \(x\)-dependent discounted SCR distribution by

\[
p_t(x)
=
\frac{
	SCR_tD_t(0)e^{-u_tx}
}{
	\sum_{s\geq0}
	SCR_sD_s(0)e^{-u_sx}
},
\qquad t\geq0.
\]

Under \(\mathbb{E}_x\), the time index \(\tau\) has probability mass function \((p_t(x))\). Expectations and covariances under this distribution are denoted by \(\mathbb{E}_x\) and \(\operatorname{Cov}_x\), respectively.

Define the direct interest-rate semi-elasticities as

\begin{equation}
	\mathcal{D}_{\mathrm{old}}(x)
	=
	-
	\frac{\partial}{\partial x}
	\log RM^{\mathrm{old}}(x),
	\qquad
	\mathcal{D}_{\mathrm{new}}(x)
	=
	-
	\frac{\partial}{\partial x}
	\log RM^{\mathrm{new}}(x).
	\label{DirectSemiElasticities}
\end{equation}

Direct differentiation gives

\[
\mathcal{D}_{\mathrm{old}}(x)
=
\mathbb{E}_x[u_\tau],
\qquad
\mathcal{D}_{\mathrm{new}}(x)
=
\frac{
	\mathbb{E}_x[u_\tau\lambda_\tau]
}{
	\mathbb{E}_x[\lambda_\tau]
}.
\]

\subsection{Lambda reweighting and sensitivity reduction}
\label{LambdaSensitivityReduction}

The second expression is the expectation of \(u_\tau\) after applying the same lambda reweighting introduced in Section \ref{LambdaReweighting}, now to the \(x\)-dependent discounted SCR distribution.

\begin{proposition}[Reduction in direct interest-rate sensitivity]
	\label{RateSensitivityReductionProp}
	
	For every \(x\) satisfying the preceding regularity conditions,
	
	\begin{equation}
		\mathcal{D}_{\mathrm{new}}(x)
		-
		\mathcal{D}_{\mathrm{old}}(x)
		=
		\frac{
			\operatorname{Cov}_x
			(u_\tau,\lambda_\tau)
		}{
			\mathbb{E}_x[\lambda_\tau]
		}
		\leq0.
		\label{SensitivityCovariance}
	\end{equation}
	
	Hence
	
	\[
	\mathcal{D}_{\mathrm{new}}(x)
	\leq
	\mathcal{D}_{\mathrm{old}}(x).
	\]
\end{proposition}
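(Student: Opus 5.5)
The plan is to start from the two semi-elasticity formulas displayed just before the statement and derive the identity by rewriting the numerator of \(\mathcal{D}_{\mathrm{new}}\) with the covariance decomposition, exactly as in the proof of Proposition \ref{CapitalIntensityProp}. First I will confirm the two formulas, since everything rests on them. Write \(a_t(x)=SCR_tD_t(0)e^{-u_tx}\). Differentiating termwise, which is allowed by the standing regularity assumption, gives \(-\partial_x RM^{\mathrm{old}}(x)=c_0\sum_t u_ta_t(x)\). Dividing by \(RM^{\mathrm{old}}(x)=c_0\sum_t a_t(x)\) yields \(\mathcal{D}_{\mathrm{old}}(x)=\mathbb{E}_x[u_\tau]\). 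In the same way, \(-\partial_x RM^{\mathrm{new}}(x)=c_1\sum_t u_t\lambda_ta_t(x)\), and dividing numerator and denominator by \(\sum_s a_s(x)\) gives \(\mathcal{D}_{\mathrm{new}}(x)=\mathbb{E}_x[u_\tau\lambda_\tau]/\mathbb{E}_x[\lambda_\tau]\). This denominator is strictly positive because \(\lambda_t\geq 0.50\).

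Next I will substitute \(\mathbb{E}_x[u_\tau\lambda_\tau]=\mathbb{E}_x[u_\tau]\mathbb{E}_x[\lambda_\tau]+\operatorname{Cov}_x(u_\tau,\lambda_\tau)\). Dividing by \(\mathbb{E}_x[\lambda_\tau]\) gives \(\mathcal{D}_{\mathrm{new}}(x)=\mathcal{D}_{\mathrm{old}}(x)+\operatorname{Cov}_x(u_\tau,\lambda_\tau)/\mathbb{E}_x[\lambda_\tau]\), which is the identity \eqref{SensitivityCovariance}. The covariance is finite: the first-moment sums are finite by assumption and \(\lambda\) is bounded, so \(u_\tau\lambda_\tau\) is integrable.

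The remaining step is the sign, and this is where the only real content lies. The sequence \(u_t=t+1\) is increasing and \(\lambda_t\) is non-increasing, so the covariance of the two sequences under any probability law on the integers is \(\leq0\). I would prove this with the symmetrization identity. Take \(\tau'\) to be an independent copy of \(\tau\) under \(\mathbb{E}_x\). Then
\[
\operatorname{Cov}_x(u_\tau,\lambda_\tau)=\tfrac12\,\mathbb{E}_x\big[(u_\tau-u_{\tau'})(\lambda_\tau-\lambda_{\tau'})\big].
\]
Each factor pair \((u_\tau-u_{\tau'})(\lambda_\tau-\lambda_{\tau'})\) is \(\leq0\) pointwise, since the two differences always have opposite signs or one of them is zero. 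The expectation is well defined because \(u\) has a finite first moment and \(\lambda\) is bounded, so the covariance is \(\leq0\).

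An equivalent route is to note that \(\mathcal{D}_{\mathrm{new}}(x)\) is the mean of \(u\) under the lambda-tilted version of \(p_t(x)\). Proposition \ref{LambdaReweightingProp}, applied at each fixed \(x\), shows that this tilt is stochastically earlier than \(p_t(x)\), and since \(u\) is increasing its mean can only fall; the one caveat is that \(u\) is unbounded, so the stochastic-order step needs the finite-first-moment assumption. Either route gives \(\mathcal{D}_{\mathrm{new}}(x)\leq\mathcal{D}_{\mathrm{old}}(x)\). No step is a serious obstacle; the only care needed is with integrability when the support is infinite.
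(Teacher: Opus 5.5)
Your proposal is correct and takes essentially the same route as the paper: you rewrite $\mathbb{E}_x[u_\tau\lambda_\tau]$ through the covariance decomposition, then obtain the sign from the fact that $u_t$ is increasing while $\lambda_t$ is non-increasing. Your termwise check of the two semi-elasticity formulas and your symmetrization argument with an independent copy $\tau'$ only spell out steps that the paper asserts without proof.
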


\begin{proof}
	Using the expectation representations above,
	
	\begin{align*}
		\mathcal{D}_{\mathrm{new}}(x)
		-
		\mathcal{D}_{\mathrm{old}}(x)
		&=
		\frac{
			\mathbb{E}_x[u_\tau\lambda_\tau]
		}{
			\mathbb{E}_x[\lambda_\tau]
		}
		-
		\mathbb{E}_x[u_\tau]
		\\
		&=
		\frac{
			\operatorname{Cov}_x
			(u_\tau,\lambda_\tau)
		}{
			\mathbb{E}_x[\lambda_\tau]
		}.
	\end{align*}
	
	Since \(u_t=t+1\) is increasing in \(t\) and \(\lambda_t\) is non-increasing,
	
	\[
	\operatorname{Cov}_x
	(u_\tau,\lambda_\tau)
	\leq0.
	\]
	
	Moreover, \(\mathbb{E}_x[\lambda_\tau]>0\), which proves the result.
\end{proof}

The covariance identity shows that lambda reweighting concentrates the normalized Risk Margin contributions at earlier dates and reduces the direct semi-elasticity with respect to the imposed parallel rate shift. From an actuarial perspective, the revised formula thereby reduces the effective timing of the regulatory capital costs embedded in technical provisions. The inequality is strict whenever the support of the discounted SCR distribution contains positive probability at dates associated with distinct lambda factors.

Equality arises whenever \(\lambda_t\) is constant on the relevant support. In particular, if all discounted SCR contributions occur at dates \(t\geq17\), then \(\lambda_t=0.50\) throughout the support and \(\mathcal{D}_{\mathrm{new}}(x)=\mathcal{D}_{\mathrm{old}}(x)\). Within the floor region, both calibrations therefore have the same direct semi-elasticity.

\subsection{Sensitivity of the reform effect}
\label{ReformRateSensitivity}

The same covariance structure determines how the relative effect of the reform responds to the imposed rate shift. Define

\[
R(x)
=
\frac{RM^{\mathrm{new}}(x)}
{RM^{\mathrm{old}}(x)},
\qquad
\Delta(x)
=
1-R(x).
\]

From Proposition \ref{ExactRepresentationProp},

\[
R(x)
=
\kappa
\mathbb{E}_x[\lambda_\tau].
\]

\begin{proposition}[Interest-rate sensitivity of the reform ratio]
	\label{ReformRatioRateProp}
	
	Under the preceding assumptions,
	
	\begin{equation}
		R'(x)
		=
		-
		\kappa
		\operatorname{Cov}_x
		(\lambda_\tau,u_\tau)
		\geq0,
		\qquad
		\Delta'(x)\leq0.
		\label{ReformRateDerivative}
	\end{equation}
\end{proposition}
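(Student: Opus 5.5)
The plan is to differentiate the representation \(R(x)=\kappa\,\mathbb{E}_x[\lambda_\tau]\) inherited from Proposition \ref{ExactRepresentationProp}. I will use the exponential-family structure of the shifted distribution \(p_t(x)\) and then invoke the covariance sign argument already used in Proposition \ref{RateSensitivityReductionProp}.

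First, write
\[
\mathbb{E}_x[\lambda_\tau]=\frac{N(x)}{M(x)},\qquad N(x)=\sum_{t\geq0}\lambda_tSCR_tD_t(0)e^{-u_tx},\qquad M(x)=\sum_{t\geq0}SCR_tD_t(0)e^{-u_tx}.
\]
Under the standing assumption that termwise differentiation is justified near \(x\), we have \(N'(x)=-\sum_t u_t\lambda_tSCR_tD_t(0)e^{-u_tx}\) and \(M'(x)=-\sum_t u_tSCR_tD_t(0)e^{-u_tx}\). By the quotient rule,
\[
\frac{d}{dx}\mathbb{E}_x[\lambda_\tau]=\frac{N'}{M}-\frac{N}{M}\frac{M'}{M}=-\mathbb{E}_x[u_\tau\lambda_\tau]+\mathbb{E}_x[\lambda_\tau]\,\mathbb{E}_x[u_\tau]=-\operatorname{Cov}_x(\lambda_\tau,u_\tau).
\]
Multiplying by \(\kappa\) gives the identity in \eqref{ReformRateDerivative}.

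A cross-check is available through the semi-elasticities defined in \eqref{DirectSemiElasticities}. Since \(\log R(x)=\log(c_1/c_0)+\log RM^{\mathrm{new}}(x)-\log RM^{\mathrm{old}}(x)\), we get
\[
R'(x)=R(x)\bigl(\mathcal{D}_{\mathrm{old}}(x)-\mathcal{D}_{\mathrm{new}}(x)\bigr).
\]
Substituting \eqref{SensitivityCovariance} and \(R(x)=\kappa\,\mathbb{E}_x[\lambda_\tau]\) cancels the denominator \(\mathbb{E}_x[\lambda_\tau]\) and yields the same formula.

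For the sign, \(t\mapsto\lambda_t\) is non-increasing and \(t\mapsto u_t=t+1\) is increasing. Two oppositely monotone functions of the same random variable are negatively covariant (Chebyshev's association inequality), so \(\operatorname{Cov}_x(\lambda_\tau,u_\tau)\leq0\). This is the same step as in Proposition \ref{RateSensitivityReductionProp}. Hence \(R'(x)\geq0\), and \(\Delta'(x)=-R'(x)\leq0\).

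The only delicate point is the justification of termwise differentiation and the finiteness of \(\mathbb{E}_x[u_\tau]\) and \(\mathbb{E}_x[u_\tau\lambda_\tau]\) when the horizon is infinite. Both are covered by the regularity assumptions stated before Proposition \ref{RateSensitivityReductionProp}. In addition, \(0.5\leq\lambda_t\leq1\) makes \(\mathbb{E}_x[u_\tau\lambda_\tau]\) finite whenever \(\mathbb{E}_x[u_\tau]\) is finite, so no further condition is needed.
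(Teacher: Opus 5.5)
Your proof is correct and follows essentially the same route as the paper: the paper states the identity $\frac{\partial}{\partial x}\mathbb{E}_x[f(\tau)]=-\operatorname{Cov}_x(f(\tau),u_\tau)$ without derivation, applies it with $f=\lambda$, and concludes with the same opposite-monotonicity covariance sign argument, and your quotient-rule computation makes that identity explicit. There is one small slip in your optional cross-check: $c_1$ and $c_0$ are already inside the Risk Margins, so $\log R(x)=\log RM^{\mathrm{new}}(x)-\log RM^{\mathrm{old}}(x)$ with no extra $\log(c_1/c_0)$ term, but this spurious constant disappears on differentiation and does not affect your conclusion.
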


\begin{proof}
	For any function \(f\) satisfying the required integrability conditions,
	
	\[
	\frac{\partial}{\partial x}
	\mathbb{E}_x[f(\tau)]
	=
	-
	\operatorname{Cov}_x
	(f(\tau),u_\tau).
	\]
	
	Taking \(f(\tau)=\lambda_\tau\) gives
	
	\[
	R'(x)
	=
	-
	\kappa
	\operatorname{Cov}_x
	(\lambda_\tau,u_\tau).
	\]
	
	Since \(\lambda_t\) is non-increasing and \(u_t\) is increasing,
	
	\[
	\operatorname{Cov}_x
	(\lambda_\tau,u_\tau)
	\leq0.
	\]
	
	Hence \(R'(x)\geq0\). Since
	
	\[
	\Delta(x)=1-R(x),
	\]
	
	it follows that \(\Delta'(x)\leq0\).
\end{proof}

A positive parallel rate shift therefore increases the ratio of the revised to the previous Risk Margin and reduces the percentage reduction generated by the reform. Higher discount rates reduce the present value of distant capital contributions more strongly, while those same contributions receive the smallest lambda factors. Their diminished weight consequently makes the two calibrations relatively closer.

Conversely, lower discount rates preserve more of the present value of distant SCR contributions and strengthen the relative effect of the time-dependent attenuation. This mechanism is consistent with the numerical behaviour documented for long-duration portfolios by Waszink (2024)\cite{waszink2024comments}.

\begin{remark}[Rate-dependent SCR profiles]
	\label{RateDependentSCRRemark}
	
	The total response can be characterized when the projected SCR path depends on the imposed rate shift. Let
	
	\[
	SCR_t=SCR_t(x)>0
	\]
	
	on the relevant support, and define
	
	\[
	\eta_t(x)
	=
	\frac{\partial}{\partial x}
	\log SCR_t(x),
	\qquad
	m_t(x)
	=
	(t+1)-\eta_t(x).
	\]
	
	The relevant discounted SCR distribution is then
	
	\[
	p_t(x)
	=
	\frac{
		SCR_t(x)D_t(0)e^{-u_tx}
	}{
		\sum_{s\geq0}
		SCR_s(x)D_s(0)e^{-u_sx}
	}.
	\]
	
	Under suitable differentiability and summability conditions, define the total semi-elasticities by
	
	\[
	\mathcal{D}_{\mathrm{old}}^{\mathrm{tot}}(x)
	=
	-
	\frac{\partial}{\partial x}
	\log RM^{\mathrm{old}}(x),
	\qquad
	\mathcal{D}_{\mathrm{new}}^{\mathrm{tot}}(x)
	=
	-
	\frac{\partial}{\partial x}
	\log RM^{\mathrm{new}}(x).
	\]
	
	Since each discounted capital term has logarithmic semi-elasticity \(m_t(x)\), the same differentiation argument gives
	
	\begin{equation}
		\mathcal{D}_{\mathrm{new}}^{\mathrm{tot}}(x)
		-
		\mathcal{D}_{\mathrm{old}}^{\mathrm{tot}}(x)
		=
		\frac{
			\operatorname{Cov}_x
			(m_\tau(x),\lambda_\tau)
		}{
			\mathbb{E}_x[\lambda_\tau]
		}.
		\label{TotalSemiElasticityDifference}
	\end{equation}
	
	A sufficient condition for the revised Risk Margin to retain a lower total semi-elasticity is that \(m_t(x)\) be non-decreasing in \(t\). The ordering of total semi-elasticities is therefore determined jointly by the temporal capital response and the regulatory lambda sequence. Waszink (2024)\cite{waszink2024comments} similarly distinguishes direct discount-rate sensitivity under an unchanged SCR profile from the broader response in which projected capital requirements themselves vary with interest rates.
\end{remark}

\section{Stochastic run-off persistence}
\label{StochasticRunoffPersistence}

The preceding analysis treats the projected SCR path as fixed at the valuation date. Future capital requirements are, however, state-dependent. M\"ohr (2011)\cite{mohr2011market} observes that the SCR applicable at a future date depends on the state prevailing at that date and therefore on information unavailable at time zero, while Christiansen and Niemeyer (2014)\cite{christiansen2014fundamental} analyse alternative mathematical definitions of the SCR within a multi-period framework. Pelkiewicz et al. (2020)\cite{pelkiewicz2020review} likewise emphasize the practical difficulty of projecting future capital requirements over the full liability run-off.

This section adopts a reduced-form representation of uncertainty in one feature of the projected SCR path before application of the regulatory time-dependent factor, namely its persistence. The model separates expected persistence from uncertainty around that persistence and quantifies how the revised calibration transmits both components to the Risk Margin. Its purpose is diagnostic and focuses on the temporal persistence channel.

\subsection{Reduced-form stochastic persistence model}
\label{StochasticPersistenceModel}

Let \(SCR_0>0\) and let \(Q\in[0,1]\) be a random path-level persistence parameter representing uncertainty at the valuation date about the rate at which the projected SCR run-off decays. Higher values of \(Q\) correspond to a slower release of capital over the run-off. Define

\begin{equation}
	SCR_t(Q)
	=
	SCR_0Q^t,
	\qquad
	t=0,\ldots,T.
	\label{StochasticSCRPersistence}
\end{equation}

The sequence \((SCR_t(Q))\) describes the projected capital path before regulatory time weighting, while \((\lambda_t)\) remains deterministic. Conditional on \(Q=q\),

\begin{equation}
	\begin{aligned}
		RM^{\mathrm{old}}(q)
		&=
		c_0SCR_0
		\sum_{t=0}^{T}
		D_tq^t,
		\\
		RM^{\mathrm{new}}(q)
		&=
		c_1SCR_0
		\sum_{t=0}^{T}
		\lambda_tD_tq^t.
	\end{aligned}
	\label{RiskMarginPersistence}
\end{equation}

Throughout this section, expectations with respect to \(Q\) are scenario averages used to quantify uncertainty in the projected capital run-off. For each realization \(Q=q\), the Risk Margin is the conditional regulatory calculation generated by the corresponding projected SCR path.

\subsection{Convexity and Jensen effects}
\label{JensenEffects}

The dependence of each Risk Margin on the persistence parameter follows directly from its first two derivatives.

\begin{proposition}[Monotonicity and convexity in persistence]
	\label{PersistenceConvexityProp}
	
	Both mappings
	\(q\mapsto RM^{\mathrm{old}}(q)\) and
	\(q\mapsto RM^{\mathrm{new}}(q)\)
	are non-decreasing and convex on \([0,1]\). If \(T\geq2\), both mappings are strictly convex.
\end{proposition}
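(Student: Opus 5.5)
Both Risk Margins in \eqref{RiskMarginPersistence} are finite polynomials in \(q\), so the plan is to differentiate termwise and read off the signs of the coefficients. It suffices to treat a generic polynomial
\[
P(q)=\sum_{t=0}^{T} w_t q^t,
\qquad w_t\geq0,
\]
with \(w_t=c_0SCR_0D_t\) for \(RM^{\mathrm{old}}\) and \(w_t=c_1SCR_0\lambda_tD_t\) for \(RM^{\mathrm{new}}\). In both cases \(w_t>0\) for every \(t\), because \(SCR_0>0\), \(c_0,c_1>0\), \(D_t>0\) (since \(r_{t+1}>-1\)) and \(\lambda_t\geq0.50\).

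First I will compute
\[
P'(q)=\sum_{t=1}^{T} t\,w_t q^{t-1},
\qquad
P''(q)=\sum_{t=2}^{T} t(t-1)\,w_t q^{t-2}.
\]
On \([0,1]\) every term of each sum is non-negative, because \(w_t\geq0\) and \(q^{k}\geq0\); the term \(t=1\) of \(P'\) at \(q=0\) is just \(w_1\). Hence \(P'\geq0\), so \(P\) is non-decreasing, and \(P''\geq0\), so \(P\) is convex on \([0,1]\). A polynomial is \(C^2\), so the second-derivative criterion applies on the closed interval, with one-sided derivatives at the endpoints.

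For strict convexity with \(T\geq2\), the \(t=2\) term of \(P''\) equals \(2w_2\), which is constant in \(q\) and strictly positive. All remaining terms are non-negative, so \(P''(q)\geq 2w_2>0\) on all of \([0,1]\), including \(q=0\). A \(C^2\) function with strictly positive second derivative on an interval is strictly convex.

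The only step that needs care is showing \(w_2>0\) for both calibrations, since strictness depends on it. This holds because \(\lambda_2=0.96^2>0\) and \(D_2>0\). The inequality \(P''(q)\geq 2w_2\) then covers the endpoint \(q=0\) uniformly, so no separate argument at \(q=0\) is required. For \(T\leq1\), \(P\) is affine and hence convex but not strictly convex, which explains the hypothesis \(T\geq2\).
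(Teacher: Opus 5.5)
Your proposal is correct and follows the paper's proof: differentiate the finite polynomials termwise, note that every term of \(P'\) and \(P''\) is non-negative on \([0,1]\), and use the strictly positive \(t=2\) term of \(P''\) to get strict convexity when \(T\geq2\). Your explicit bound \(P''(q)\geq 2w_2>0\) on all of \([0,1]\) is a slightly tidier statement of the paper's final step.
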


\begin{proof}
	For the previous calibration,
	
	\[
	\frac{\partial}
	{\partial q}
	RM^{\mathrm{old}}(q)
	=
	c_0SCR_0
	\sum_{t=1}^{T}
	tD_tq^{t-1}
	\geq0,
	\]
	
	and
	
	\[
	\frac{\partial^2}
	{\partial q^2}
	RM^{\mathrm{old}}(q)
	=
	c_0SCR_0
	\sum_{t=2}^{T}
	t(t-1)D_tq^{t-2}
	\geq0.
	\]
	
	Similarly,
	
	\[
	\frac{\partial}
	{\partial q}
	RM^{\mathrm{new}}(q)
	=
	c_1SCR_0
	\sum_{t=1}^{T}
	t\lambda_tD_tq^{t-1}
	\geq0,
	\]
	
	and
	
	\[
	\frac{\partial^2}
	{\partial q^2}
	RM^{\mathrm{new}}(q)
	=
	c_1SCR_0
	\sum_{t=2}^{T}
	t(t-1)\lambda_tD_tq^{t-2}
	\geq0.
	\]
	
	Because \(SCR_0>0\), \(D_t>0\) and \(\lambda_t>0\), the \(t=2\) term makes both functions strictly convex whenever \(T\geq2\).
\end{proof}

The convexity result yields the following comparison.

\begin{corollary}[Jensen effect]
	\label{JensenEffectCor}
	
	For any random \(Q\in[0,1]\),
	
	\begin{equation}
		\begin{aligned}
			\mathbb{E}
			\left[
			RM^{\mathrm{old}}(Q)
			\right]
			&\geq
			RM^{\mathrm{old}}
			\left(
			\mathbb{E}[Q]
			\right),
			\\
			\mathbb{E}
			\left[
			RM^{\mathrm{new}}(Q)
			\right]
			&\geq
			RM^{\mathrm{new}}
			\left(
			\mathbb{E}[Q]
			\right).
		\end{aligned}
		\label{JensenEffect}
	\end{equation}
	
	If \(T\geq2\) and \(Q\) is non-degenerate, both inequalities are strict.
\end{corollary}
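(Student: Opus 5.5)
The plan is to apply Jensen's inequality to each mapping using the convexity established in Proposition \ref{PersistenceConvexityProp}. The only work lies in justifying the expectations and in handling strictness.

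First, I check that everything is well defined. Since \(Q\in[0,1]\) and \(T\) is finite, each mapping \(q\mapsto RM^{\mathrm{old}}(q)\) and \(q\mapsto RM^{\mathrm{new}}(q)\) is a polynomial with nonnegative coefficients. It is therefore continuous and bounded on \([0,1]\), so \(\mathbb{E}[RM(Q)]\) exists and is finite. Moreover, \(\mathbb{E}[Q]\in[0,1]\), so the right-hand sides make sense.

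Second, I obtain the weak inequalities. For a convex function \(\varphi\) on the interval \([0,1]\), take a supporting line at \(\bar q=\mathbb{E}[Q]\): \(\varphi(q)\geq\varphi(\bar q)+g(q-\bar q)\), with \(g\) a subgradient. If \(\bar q\) is interior, \(g=\varphi'(\bar q)\) by smoothness. If \(\bar q\in\{0,1\}\), then \(Q=\bar q\) almost surely and equality is trivial. Taking expectations kills the linear term and gives \(\mathbb{E}[\varphi(Q)]\geq\varphi(\bar q)\). Applying this with \(\varphi=RM^{\mathrm{old}}\) and \(\varphi=RM^{\mathrm{new}}\) yields \eqref{JensenEffect}.

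Third, I handle strictness, which is the only delicate step. When \(T\geq2\), Proposition \ref{PersistenceConvexityProp} gives strict convexity, indeed \(\varphi''>0\) on \((0,1]\), and \(\varphi''(0)>0\) as well from the \(t=2\) term. Hence the tangent-line inequality is strict for every \(q\neq\bar q\). Non-degeneracy of \(Q\) forces \(\bar q\in(0,1)\), since otherwise \(Q\) would be a.s. constant. It also gives \(\mathbb{P}(Q\neq\bar q)>0\). The nonnegative random variable \(\varphi(Q)-\varphi(\bar q)-\varphi'(\bar q)(Q-\bar q)\) is then strictly positive on an event of positive probability. Its expectation is therefore strictly positive, which gives both strict inequalities.
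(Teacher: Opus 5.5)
Your proof is correct and follows the paper's route: the paper presents this corollary as an immediate consequence of the convexity (and, for \(T\geq2\), strict convexity) established in Proposition~\ref{PersistenceConvexityProp}, combined with Jensen's inequality. Your tangent-line argument simply writes out Jensen's inequality and its strict case explicitly, including the endpoint case \(\mathbb{E}[Q]\in\{0,1\}\) and the observation that \(\varphi''>0\) on all of \([0,1]\).
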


Within the reduced-form model, replacing uncertain persistence by its mean therefore understates the scenario-average Risk Margin. The effect is generated by the genuinely multi-period terms \(t\geq2\), since the contributions at \(t=0\) and \(t=1\) are respectively constant and linear in \(Q\). Actuarially, the comparison isolates the capital-cost effect of persistence uncertainty that is omitted when the run-off is represented solely by a mean decay rate.

\subsection{Persistence uncertainty under convex order}
\label{PersistenceConvexOrder}

The Jensen comparison extends to persistence distributions with the same mean.

\begin{proposition}[Persistence uncertainty under convex order]
	\label{PersistenceConvexOrderProp}
	
	Let \(Q_A,Q_B\in[0,1]\) satisfy
	
	\[
	Q_A
	\leq_{\mathrm{cx}}
	Q_B.
	\]
	
	Then
	
	\begin{align*}
		\mathbb{E}
		\left[
		RM^{\mathrm{old}}(Q_A)
		\right]
		&\leq
		\mathbb{E}
		\left[
		RM^{\mathrm{old}}(Q_B)
		\right],
		\\
		\mathbb{E}
		\left[
		RM^{\mathrm{new}}(Q_A)
		\right]
		&\leq
		\mathbb{E}
		\left[
		RM^{\mathrm{new}}(Q_B)
		\right].
	\end{align*}
\end{proposition}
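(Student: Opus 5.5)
The plan is to reduce the statement to the defining property of convex order, using the convexity already established in Proposition \ref{PersistenceConvexityProp}. By definition, \(Q_A\leq_{\mathrm{cx}}Q_B\) means that
\[
\mathbb{E}[\varphi(Q_A)]\leq\mathbb{E}[\varphi(Q_B)]
\]
for every convex function \(\varphi\) for which both expectations exist; see Shaked and Shanthikumar (2007)\cite{shaked2007stochastic}. It therefore suffices to check that \(\varphi_{\mathrm{old}}(q)=RM^{\mathrm{old}}(q)\) and \(\varphi_{\mathrm{new}}(q)=RM^{\mathrm{new}}(q)\) are admissible test functions.

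The first step is convexity. By \eqref{RiskMarginPersistence}, both mappings are polynomials in \(q\) with non-negative coefficients, namely \(c_0SCR_0D_t\) and \(c_1SCR_0\lambda_tD_t\) respectively. Proposition \ref{PersistenceConvexityProp} already shows that both are convex on \([0,1]\).

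The second step is integrability. Since \(Q_A,Q_B\in[0,1]\) and the polynomials are continuous on this compact interval, each is bounded there by its value at \(q=1\), which is \(c_0SCR_0\sum_{t=0}^{T}D_t<\infty\) or the analogous finite sum. All expectations in the statement are therefore finite. Applying the convex-order inequality with \(\varphi=\varphi_{\mathrm{old}}\) and then with \(\varphi=\varphi_{\mathrm{new}}\) gives the two displayed inequalities.

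The only point needing care is the domain. Convex order is usually stated for convex functions on \(\mathbb{R}\), whereas Proposition \ref{PersistenceConvexityProp} gives convexity only on \([0,1]\). Outside \([0,1]\) the polynomials need not be convex: odd powers \(q^t\) are concave for \(q<0\). There are two ways to handle this.

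\begin{itemize}
\item \textbf{Convex extension.} Replace each polynomial by a convex extension to \(\mathbb{R}\) that agrees with it on \([0,1]\). One choice is to extend it affinely outside the interval using the one-sided derivatives at the endpoints. Because both variables are supported in \([0,1]\), the expectations are unchanged.
\item \textbf{Stop-loss characterization.} Alternatively, use the equivalent characterization of \(\leq_{\mathrm{cx}}\) through equal means and ordered stop-loss transforms \(\mathbb{E}[(Q-k)_+]\). A convex function on \([0,1]\) can be written as an affine part plus a mixture of functions \((q-k)_+\) for \(k\in[0,1]\). This gives the same conclusion.
\end{itemize}

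I would present the first option, as it is shortest.
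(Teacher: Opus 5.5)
Your proposal is correct and follows the paper's own argument: convexity of both Risk Margin maps from Proposition \ref{PersistenceConvexityProp}, combined with the defining property of convex order. Your extra care on integrability and on the domain (extending each polynomial affinely outside $[0,1]$ with its endpoint derivatives, which keeps the derivative non-decreasing and hence the extension convex) fills in a point the paper leaves implicit, and it is valid.
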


\begin{proof}
	Both Risk Margin functions are convex by Proposition \ref{PersistenceConvexityProp}. The result follows from the defining property of convex order; see Shaked and Shanthikumar (2007)\cite{shaked2007stochastic}.
\end{proof}

At a fixed mean persistence, greater uncertainty in the convex-order sense therefore increases the scenario-average Risk Margin under both calibrations. Expected run-off persistence and uncertainty around that persistence consequently represent distinct dimensions of the projected capital profile.

\subsection{Jensen gap bounds}
\label{JensenGapBounds}

Define

\begin{equation}
	\begin{aligned}
		J_{\mathrm{old}}
		&=
		\mathbb{E}
		\left[
		RM^{\mathrm{old}}(Q)
		\right]
		-
		RM^{\mathrm{old}}
		\left(
		\mathbb{E}[Q]
		\right),
		\\
		J_{\mathrm{new}}
		&=
		\mathbb{E}
		\left[
		RM^{\mathrm{new}}(Q)
		\right]
		-
		RM^{\mathrm{new}}
		\left(
		\mathbb{E}[Q]
		\right).
	\end{aligned}
	\label{JensenGaps}
\end{equation}

For \(t\geq2\), set

\[
\delta_t
=
\mathbb{E}[Q^t]
-
\left(\mathbb{E}[Q]\right)^t.
\]

Convexity of \(q\mapsto q^t\) gives \(\delta_t\geq0\), with strict inequality for non-degenerate \(Q\). Hence

\begin{equation}
	\begin{aligned}
		J_{\mathrm{old}}
		&=
		c_0SCR_0
		\sum_{t=2}^{T}
		D_t\delta_t,
		\\
		J_{\mathrm{new}}
		&=
		c_1SCR_0
		\sum_{t=2}^{T}
		\lambda_tD_t\delta_t.
	\end{aligned}
	\label{JensenGapSums}
\end{equation}

When \(J_{\mathrm{old}}>0\), define

\[
\omega_t
=
\frac{
	D_t\delta_t
}{
	\sum_{s=2}^{T}D_s\delta_s
},
\qquad
t=2,\ldots,T.
\]

Then

\[
\omega_t\geq0,
\qquad
\sum_{t=2}^{T}\omega_t=1.
\]

\begin{proposition}[Relative attenuation of the Jensen gap]
	\label{JensenGapAttenuationProp}
	
	If \(T\geq2\) and \(J_{\mathrm{old}}>0\), then
	
	\begin{equation}
		\begin{aligned}
			\frac{
				J_{\mathrm{new}}
			}{
				J_{\mathrm{old}}
			}
			&=
			\kappa
			\sum_{t=2}^{T}
			\omega_t\lambda_t,
			\\[1mm]
			\kappa\lambda_T
			&\leq
			\frac{
				J_{\mathrm{new}}
			}{
				J_{\mathrm{old}}
			}
			\leq
			\kappa\lambda_2.
		\end{aligned}
		\label{JensenGapRatioBounds}
	\end{equation}
\end{proposition}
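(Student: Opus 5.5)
The plan is to work directly from the closed-form Jensen gaps in \eqref{JensenGapSums}, treating the ratio as a weighted average of lambda factors under the normalized weights \(\omega_t\). This mirrors the proof of Proposition \ref{ExactRepresentationProp}, with the discounted SCR run-off distribution replaced by the distribution \((\omega_t)_{t=2}^{T}\) of the Jensen gap across dates.

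\textbf{Step 1: the identity.} Set \(S=\sum_{s=2}^{T}D_s\delta_s\). From \eqref{JensenGapSums}, \(J_{\mathrm{old}}=c_0SCR_0S\), and the hypothesis \(J_{\mathrm{old}}>0\) gives \(S>0\) because \(c_0SCR_0>0\). Hence the weights \(\omega_t\) are well defined, nonnegative and sum to one. By definition \(D_t\delta_t=S\omega_t\), so
\[
J_{\mathrm{new}}=c_1SCR_0\sum_{t=2}^{T}\lambda_tD_t\delta_t=c_1SCR_0S\sum_{t=2}^{T}\omega_t\lambda_t.
\]
Dividing by \(J_{\mathrm{old}}\) and using \(\kappa=c_1/c_0\) gives the first line of \eqref{JensenGapRatioBounds}.

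\textbf{Step 2: the bounds.} Since \(t\mapsto\lambda_t\) is non-increasing, every \(t\in\{2,\ldots,T\}\) satisfies \(\lambda_T\leq\lambda_t\leq\lambda_2\). Multiplying by \(\omega_t\geq0\), summing, and using \(\sum_t\omega_t=1\) yields
\[
\lambda_T\leq\sum_{t=2}^{T}\omega_t\lambda_t\leq\lambda_2.
\]
Multiplying by \(\kappa>0\) gives the second line. This is the same argument as in Proposition \ref{HorizonBoundsProp}, with the support \(\{0,\ldots,T\}\) replaced by \(\{2,\ldots,T\}\). The support excludes \(t=0,1\) because \(\delta_0=\delta_1=0\), so those dates contribute nothing to either gap.

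\textbf{Main obstacle.} There is essentially none; the only point needing care is confirming that \(S>0\), which is what makes the weights meaningful. I would also remark on sharpness. The bounds are approached when \(\omega\) concentrates near \(t=2\) or near \(t=T\). Exact attainment is generally impossible for non-degenerate \(Q\), because then every \(\delta_t>0\) and every \(\omega_t>0\), so the inequalities are strict unless \(\lambda\) is constant on \(\{2,\ldots,T\}\). For that reason I would not claim the bounds are attained.
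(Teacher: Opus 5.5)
Your proof is correct and follows the paper's route. You normalize the Jensen-gap sums into the weights \(\omega_t\), read \(J_{\mathrm{new}}/J_{\mathrm{old}}\) as \(\kappa\) times an \(\omega\)-weighted average of \(\lambda_t\), and bound that average by \(\lambda_T\) and \(\lambda_2\) using monotonicity of \(\lambda\). Your further remark is also correct and goes slightly beyond the paper: \(J_{\mathrm{old}}>0\) forces \(Q\) to be non-degenerate, so every \(\omega_t>0\), and hence both bounds are strict for \(T\geq3\) and attained only in the trivial case \(T=2\).
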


\begin{proof}
	From \eqref{JensenGapSums},
	
	\begin{align*}
		\frac{J_{\mathrm{new}}}
		{J_{\mathrm{old}}}
		&=
		\frac{c_1}{c_0}
		\frac{
			\sum_{t=2}^{T}
			\lambda_tD_t\delta_t
		}{
			\sum_{t=2}^{T}
			D_t\delta_t
		}
		\\
		&=
		\kappa
		\sum_{t=2}^{T}
		\omega_t\lambda_t.
	\end{align*}
	
	Since \((\omega_t)\) is a probability distribution and \(\lambda_t\) is non-increasing,
	
	\[
	\lambda_T
	\leq
	\sum_{t=2}^{T}
	\omega_t\lambda_t
	\leq
	\lambda_2.
	\]
	
	Multiplication by \(\kappa\) gives the stated bounds.
\end{proof}

The upper bound is

\[
\kappa\lambda_2
=
\frac{19}{24}(0.96)^2
=
0.7296.
\]

Thus, whenever the Jensen gap is positive, the gap under the revised calibration is at most \(72.96\%\) of its counterpart under the previous calibration. The attenuation combines the lower cost-of-capital rate with the reduced regulatory weights assigned to the multi-period terms through which persistence uncertainty affects the Risk Margin. The ratio compares absolute Jensen gaps measured in Risk Margin units; normalization by the corresponding Risk Margins would define a different quantity.

For \(T\leq1\), or for degenerate \(Q\),

\[
J_{\mathrm{old}}
=
J_{\mathrm{new}}
=
0,
\]

so that the ratio in Proposition \ref{JensenGapAttenuationProp} is not defined.

\subsection{Structural and transitory persistence}
\label{StructuralTransitoryPersistence}

The path-level specification uses a single persistence factor throughout the run-off and therefore represents structural persistence. A contrasting specification uses independently renewed period-specific factors and represents transitory persistence. Define

\begin{equation}
	\begin{aligned}
		SCR_t^{\mathrm{str}}
		&=
		SCR_0Q^t,
		\\
		SCR_t^{\mathrm{tr}}
		&=
		SCR_0
		\prod_{j=1}^{t}Q_j,
	\end{aligned}
	\qquad
	t=0,\ldots,T,
	\label{StructuralTransitorySCR}
\end{equation}

where \(Q_1,Q_2,\ldots\) are independent and identically distributed copies of \(Q\), with the usual empty-product convention at \(t=0\).

The expected capital profiles are

\[
\begin{aligned}
	\mathbb{E}
	\left[
	SCR_t^{\mathrm{str}}
	\right]
	&=
	SCR_0
	\mathbb{E}[Q^t],
	\\
	\mathbb{E}
	\left[
	SCR_t^{\mathrm{tr}}
	\right]
	&=
	SCR_0
	\left(
	\mathbb{E}[Q]
	\right)^t.
\end{aligned}
\]

Let
\(RM_{\mathrm{str}}^{\mathrm{old}}\) and
\(RM_{\mathrm{str}}^{\mathrm{new}}\)
denote the Risk Margins obtained from the structural capital path, and let
\(RM_{\mathrm{tr}}^{\mathrm{old}}\) and
\(RM_{\mathrm{tr}}^{\mathrm{new}}\)
denote the corresponding quantities under transitory persistence.

\begin{proposition}[Structural versus transitory persistence]
	\label{StructuralTransitoryProp}
	
	For every \(t\geq0\),
	
	\[
	\mathbb{E}
	\left[
	SCR_t^{\mathrm{str}}
	\right]
	\geq
	\mathbb{E}
	\left[
	SCR_t^{\mathrm{tr}}
	\right].
	\]
	
	Consequently,
	
	\[
	\mathbb{E}
	\left[
	RM_{\mathrm{str}}^{\mathrm{old}}
	\right]
	\geq
	\mathbb{E}
	\left[
	RM_{\mathrm{tr}}^{\mathrm{old}}
	\right]
	\]
	
	and
	
	\[
	\mathbb{E}
	\left[
	RM_{\mathrm{str}}^{\mathrm{new}}
	\right]
	\geq
	\mathbb{E}
	\left[
	RM_{\mathrm{tr}}^{\mathrm{new}}
	\right].
	\]
	
	If \(T\geq2\) and \(Q\) is non-degenerate, both Risk Margin inequalities are strict.
\end{proposition}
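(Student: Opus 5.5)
The plan is to reduce everything to a termwise comparison of expected capital profiles, followed by linearity of expectation in the Risk Margin formulas.

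\textbf{Step 1: expected profiles.} For the transitory specification, independence of the \(Q_j\) and the fact that they are identically distributed give
\[
\mathbb{E}\bigl[SCR_t^{\mathrm{tr}}\bigr]=SCR_0\prod_{j=1}^{t}\mathbb{E}[Q_j]=SCR_0\left(\mathbb{E}[Q]\right)^t.
\]
These products are bounded, since \(Q_j\in[0,1]\), so no integrability issue arises. For the structural specification, \(\mathbb{E}[SCR_t^{\mathrm{str}}]=SCR_0\mathbb{E}[Q^t]\). The two expressions agree at \(t=0\) and \(t=1\). For \(t\geq2\), Jensen's inequality applied to the convex map \(q\mapsto q^t\) on \([0,1]\) gives \(\mathbb{E}[Q^t]\geq(\mathbb{E}[Q])^t\). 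This is exactly \(\delta_t\geq0\) as already noted before Proposition \ref{JensenGapAttenuationProp}, and it proves the first claim.

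\textbf{Step 2: expected Risk Margins.} The discount factors \(D_t\), the weights \(\lambda_t\) and the horizon \(T\) are deterministic and the sums are finite. Linearity of expectation therefore gives
\[
\mathbb{E}\bigl[RM_{\mathrm{str}}^{\mathrm{old}}\bigr]-\mathbb{E}\bigl[RM_{\mathrm{tr}}^{\mathrm{old}}\bigr]=c_0SCR_0\sum_{t=2}^{T}D_t\delta_t=J_{\mathrm{old}},
\]
and, in the same way, the new-calibration difference equals \(c_1SCR_0\sum_{t=2}^{T}\lambda_tD_t\delta_t=J_{\mathrm{new}}\), matching \eqref{JensenGapSums}. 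Both differences are nonnegative because every term is nonnegative, with \(D_t>0\) and \(\lambda_t>0\). This proves the two Risk Margin inequalities.

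The main thing to note is that the transitory expected Risk Margin coincides with the Risk Margin evaluated at \(\mathbb{E}[Q]\). The comparison is therefore precisely the Jensen gap of Corollary \ref{JensenEffectCor}, which could simply be invoked in place of the computation in Step 2.

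\textbf{Step 3: strictness.} This is the only delicate point. Suppose \(T\geq2\) and \(Q\) is non-degenerate. Then \(q\mapsto q^2\) is strictly convex, so \(\delta_2=\operatorname{Var}(Q)>0\). Since \(SCR_0>0\), \(D_2>0\) and \(\lambda_2>0\), the \(t=2\) term alone makes both differences strictly positive, and the two Risk Margin inequalities are strict. The same conclusion also follows from the strict part of Corollary \ref{JensenEffectCor}.
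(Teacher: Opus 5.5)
Your proof is correct and follows essentially the same route as the paper: termwise Jensen for \(q\mapsto q^t\) with \(t\geq2\), then summation against the strictly positive weights \(D_t\) and \(\lambda_t\). Your explicit independence computation of the transitory mean, the identification of the two differences with \(J_{\mathrm{old}}\) and \(J_{\mathrm{new}}\), and the use of \(\delta_2=\operatorname{Var}(Q)>0\) for strictness are small additions, all consistent with the paper.
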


\begin{proof}
	Equality holds for \(t=0\) and \(t=1\). For every integer \(t\geq2\), the mapping
	
	\[
	q\mapsto q^t
	\]
	
	is strictly convex on \([0,1]\). Jensen's inequality therefore gives
	
	\[
	\mathbb{E}[Q^t]
	\geq
	\left(
	\mathbb{E}[Q]
	\right)^t,
	\]
	
	with strict inequality when \(Q\) is non-degenerate. Multiplication by \(SCR_0\) gives the capital-profile comparison. Since the discount factors and regulatory weights are strictly positive, summation over \(t\) yields the corresponding Risk Margin inequalities.
\end{proof}

Structural persistence therefore produces a higher expected future capital profile and a larger scenario-average Risk Margin than independently renewed persistence factors sharing the same one-period marginal distribution. The difference reflects dependence across the run-off. The proposition establishes an ordering of expectations; pathwise and stochastic comparisons require additional assumptions.
\section{Numerical analysis}
\label{NumericalAnalysis}

The numerical analysis combines run-off profiles documented in the actuarial literature with controlled experiments designed to isolate the mechanisms established above. The profiles reported by England et al. (2019)\cite{england2019lifetime} and Dreksler et al. (2015)\cite{dreksler2015solvency} serve as documented numerical benchmarks, while synthetic constructions vary timing, dispersion, persistence, horizon and interest-rate exposure independently. Unless otherwise stated, the calculations use \(c_0=6\%\), \(c_1=4.75\%\), and the time-dependent factor defined in \eqref{LambdaFactor}.

\subsection{Controlled run-off benchmarks}
\label{ControlledRunoffBenchmarks}

Consider first the limiting case in which the entire discounted SCR contribution is concentrated at a single run-off date \(t\). Proposition \ref{ExactRepresentationProp} then gives

\[
\Delta_t
=
1-\kappa\lambda_t.
\]

Figure \ref{Fig-1-SingleDateReduction} reports the corresponding reform effect over a 40-year horizon. The vertical reference line marks \(t=17\), the first integer date at which the \(50\%\) floor becomes binding.

\begin{figure}[htbp]
	\centering
	\includegraphics[width=0.82\textwidth]{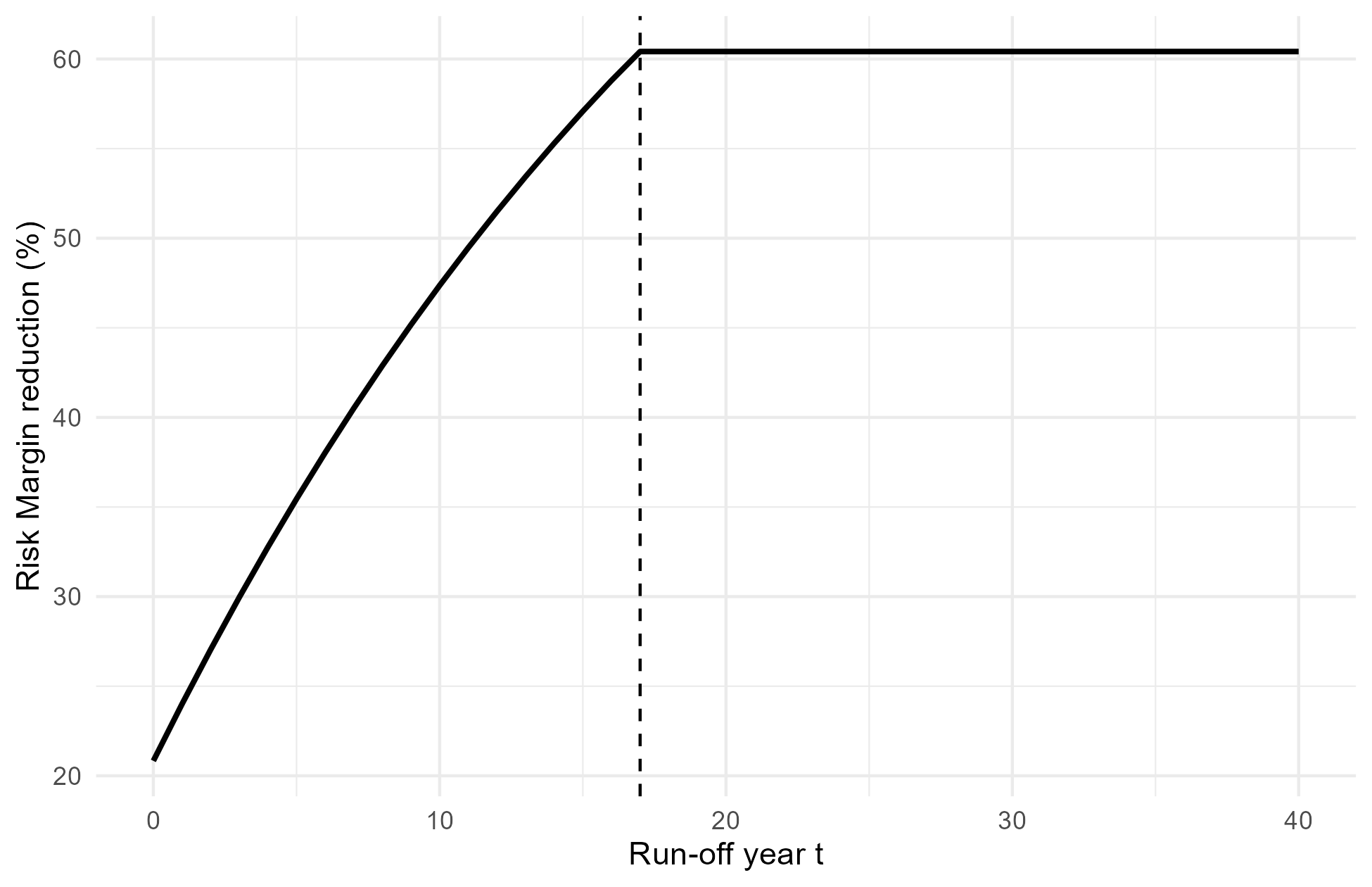}
	\caption{Percentage Risk Margin reduction for a single-date discounted SCR contribution.}
	\label{Fig-1-SingleDateReduction}
\end{figure}

At \(t=0\), \(\lambda_0=1\), so the only change is the reduction in the prescribed cost-of-capital rate and

\[
\Delta_0
=
1-\kappa
\simeq
20.83\%.
\]

As the capital contribution is shifted to later dates, the time-dependent adjustment produces an additional reduction. Once the floor becomes binding,

\[
\Delta_t
=
1-\frac{19}{48}
\simeq
60.42\%,
\qquad t\geq17.
\]

The single-date experiment therefore attains the two universal extremes established in Corollary \ref{UniversalBoundsCor}.

A second experiment isolates temporal dispersion while holding the mean run-off time fixed. Consider two discounted SCR distributions supported on \(\{0,\ldots,20\}\), both satisfying

\[
\mathbb{E}[\tau]=10.
\]

The first places all mass at \(t=10\). The second assigns probability \(1/2\) to each endpoint \(t=0\) and \(t=20\). Their standard deviations are \(0\) and \(10\), respectively.

The concentrated profile produces a Risk Margin reduction of \(47.37\%\), compared with \(40.63\%\) for the endpoint distribution. The difference of approximately \(6.74\) percentage points arises despite identical horizons and mean run-off times.

For \(T=20\) and \(\mu=10\), Theorem \ref{SharpMeanBoundsThm} gives the sharp interval

\[
40.625\%
\leq
\Delta
\leq
47.367\%.
\]

Both endpoints are attained by the two distributions above. The experiment therefore illustrates simultaneously the convex-order result in Proposition \ref{ConvexOrderProp} and the sharpness of the fixed-mean bounds.

\subsection{Published run-off profiles}
\label{PublishedRunoffProfiles}

We next consider the capital run-off profiles reported by England et al. (2019)\cite{england2019lifetime}. Their numerical analysis constructs three future capital profiles based on the Best Estimate, the standard deviation of future claims development results, and the \(99.5\%\) Value-at-Risk. These alternatives are derived from the same underlying reserving problem and therefore provide a useful setting for comparing different temporal patterns of capital release.

The example uses a \(6\%\) cost-of-capital rate and a flat \(3\%\) discount rate. Before introducing the revised calibration, the Risk Margins implied by the published capital profiles are reconstructed as \(818{,}047.45\), \(822{,}320.99\), and \(810{,}816.20\) for the Best Estimate, standard-deviation and VaR bases, respectively. These reproduce the reported values \(818{,}047\), \(822{,}321\), and \(810{,}816\) to rounding accuracy.

Figure \ref{Fig-2-EnglandProfiles} displays the three capital profiles after normalising the opening requirement to 100.

\begin{figure}[htbp]
	\centering
	\includegraphics[width=0.84\textwidth]{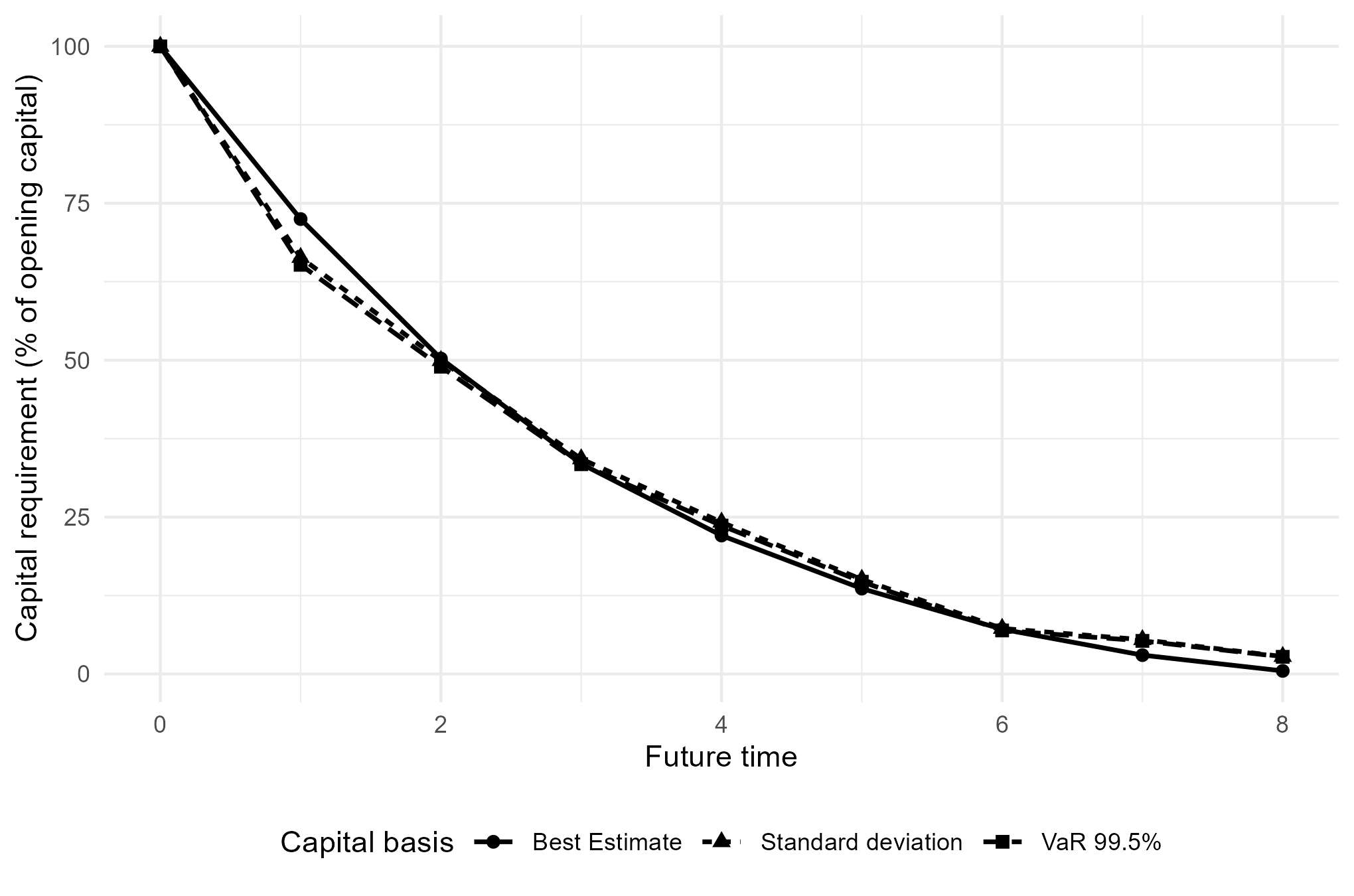}
	\caption{Published capital run-off profiles normalised to 100 at the initial date.}
	\label{Fig-2-EnglandProfiles}
\end{figure}

The standard-deviation and VaR profiles are particularly close. England et al. (2019)\cite{england2019lifetime} relate this behaviour to the approximate proportionality between a high quantile and the standard deviation when the aggregate claims development distribution is sufficiently close to normality. After normalisation at the initial date, this produces similar temporal profiles.

The proximity of the Best Estimate profile has a different interpretation. The run-off of expected remaining liabilities can differ from both the standard deviation and a high quantile of future reserve risk. Its similarity to the other profiles is therefore a feature of this numerical example rather than a structural relationship.

For each profile, let \(\sigma_\tau=\sqrt{\operatorname{Var}(\tau)}\) denote the standard deviation of the discounted SCR run-off time. Table \ref{Tab-1-PublishedProfiles} applies the 2027 calibration to the three England et al. profiles and to the SCR run-off reported by Dreksler et al. (2015)\cite{dreksler2015solvency}.

\begin{table}[htbp]
	\centering
	\small
	\caption{Reform effects for published run-off profiles.}
	\label{Tab-1-PublishedProfiles}
	\begin{tabular}{lrrrrrr}
		\toprule
		Profile & \(T\) & \(RM^{\mathrm{old}}\) & \(RM^{\mathrm{new}}\) &
		Reduction (\%) & \(\mu\) & \(\sigma_\tau\) \\
		\midrule
		England et al. Best Estimate
		& 8 & 818047.45 & 609031.44 & 25.55 & 1.560 & 1.665 \\
		England et al. Standard deviation
		& 8 & 822320.99 & 609365.16 & 25.90 & 1.685 & 1.816 \\
		England et al. VaR 99.5\%
		& 8 & 810816.20 & 601249.15 & 25.85 & 1.668 & 1.811 \\
		Dreksler et al. illustrative profile
		& 7 & 1390.00 & 1038.80 & 25.27 & 1.476 & 1.798 \\
		\bottomrule
	\end{tabular}
\end{table}

The Dreksler et al. (2015)\cite{dreksler2015solvency} profile is explicitly presented by the authors as illustrative dummy data. It is used here solely as an additional published benchmark for the reform mechanics. Because the published table contains rounded discounted cost-of-capital contributions, the revised Risk Margin reconstructed from those quantities is approximate.

Comparisons between the previous and revised calibrations are conducted within each published profile, since the absolute Risk Margin levels of the England and Dreksler examples are expressed on different scales.

The four reductions lie between approximately \(25\%\) and \(26\%\), only moderately above the \(20.83\%\) reduction generated by the cost-of-capital change alone. Their relatively short capital run-offs leave most discounted SCR contributions in the early years, where the lambda attenuation remains limited.

The fixed-mean and horizon bounds are correspondingly tight. For the England Best Estimate profile, \(T=8\) and \(\mu=1.560\) imply

\[
25.13\%
\leq
\Delta
\leq
25.70\%,
\]

compared with an actual reduction of \(25.55\%\). The corresponding intervals are approximately

\[
25.48\%
\leq
\Delta
\leq
26.08\%
\]

for the standard-deviation profile and

\[
25.43\%
\leq
\Delta
\leq
26.03\%
\]

for the VaR profile. The mean discounted-SCR run-off time and terminal horizon therefore constrain the reform effect closely in these short-run-off examples.

\subsection{Best Estimate and alternative capital profiles}
\label{BestEstimateAlternativeProfiles}

The England et al. (2019)\cite{england2019lifetime} profiles also allow the proportional Best Estimate approximation to be examined directly. EIOPA's Method 2 approximates future SCRs proportionally to the projected Best Estimate \cite{eiopa2026valuation}. The published Best Estimate sequence is therefore used as the liability profile, while the Best Estimate, standard-deviation and VaR series are successively treated as alternative capital paths.

Departures from proportionality are measured by the capital-intensity sequence

\[
h_t
=
\frac{SCR_t}{BE_t}.
\]

Table \ref{Tab-2-CapitalProfileComparison} compares the full reform ratio with its proportional Best Estimate counterpart. Define the revised Risk Margin level error as

\[
E_{\mathrm{RM}}
=
RM_{\mathrm{full}}^{\mathrm{new}}
-
RM_{\mathrm{M2}}^{\mathrm{new}}.
\]

\begin{table}[htbp]
	\centering
	\small
	\caption{Best Estimate approximation across capital profiles.}
	\label{Tab-2-CapitalProfileComparison}
	\begin{tabular}{lrrrrrr}
		\toprule
		Capital basis &
		\(R_{\mathrm{full}}\) &
		\(R_{\mathrm{M2}}\) &
		\(\Delta_{\mathrm{full}}\) (\%) &
		\(\Delta_{\mathrm{M2}}\) (\%) &
		\(\operatorname{Cov}_{\pi}(\lambda_\tau,h_\tau)\) &
		\(E_{\mathrm{RM}}\) \\
		\midrule
		Best Estimate
		& 0.744494 & 0.744494 & 25.551 & 25.551
		& \(-1.60\times10^{-9}\) & 0.03 \\
		Standard deviation
		& 0.741031 & 0.744494 & 25.897 & 25.551
		& \(-1.23\times10^{-3}\) & 333.75 \\
		VaR 99.5\%
		& 0.741536 & 0.744494 & 25.846 & 25.551
		& \(-1.04\times10^{-3}\) & -7782.25 \\
		\bottomrule
	\end{tabular}
\end{table}

When the Best Estimate itself is used as the capital basis, proportionality is recovered essentially exactly, with the negligible residual discrepancy arising from rounding in the published inputs.

For both the standard-deviation and VaR capital profiles,

\[
\operatorname{Cov}_{\pi}
(\lambda_\tau,h_\tau)
<
0.
\]

Proposition \ref{CapitalIntensityProp} therefore implies

\[
R_{\mathrm{full}}
<
R_{\mathrm{M2}},
\]

so that the proportional Best Estimate approximation slightly understates the percentage reduction generated by the reform. The Method 2 reduction is \(25.55\%\), compared with \(25.90\%\) under the standard-deviation profile and \(25.85\%\) under the VaR profile.

The Risk Margin level error behaves differently. For the standard-deviation profile,

\[
E_{\mathrm{RM}}
\simeq
333.75,
\]

whereas for the VaR profile,

\[
E_{\mathrm{RM}}
\simeq
-7{,}782.25.
\]

Relative to the corresponding full revised Risk Margins, these errors are approximately \(0.055\%\) and \(-1.294\%\), respectively. The Best Estimate case is effectively exact at the reported precision.

The change in sign illustrates the distinction established analytically in Section \ref{CapitalIntensityMethod2}. The covariance in Proposition \ref{CapitalIntensityProp} determines the difference between the relative reform ratios, whereas the approximation error in the Risk Margin level also depends on the average capital-intensity level. The sign of the absolute valuation error is therefore determined jointly by the timing interaction and the capital-intensity level.

The standard-deviation and VaR capital-intensity sequences vary non-monotonically over the run-off. The exact covariance identity therefore provides the applicable characterization beyond the monotonicity-based sufficient conditions in Corollary \ref{MonotoneCapitalIntensityCor}.

\subsection{Long-horizon and interest-rate scenarios}
\label{LongHorizonRates}

The published profiles have relatively short horizons and jointly determine persistence and maturity. The geometric benchmark introduced in Section \ref{GeometricRunoff},

\[
SCR_t
=
SCR_0q^t,
\]

provides a controlled setting in which these features can be varied independently.

The annual effective risk-free rate is fixed at \(3\%\), \(q\) varies between \(0.05\) and \(0.99\), and the horizons are set to

\[
T=10,\qquad T=20,\qquad T=40.
\]

Figure \ref{Fig-3-GeometricRunoff} reports the resulting percentage reductions.

\begin{figure}[htbp]
	\centering
	\includegraphics[width=0.82\textwidth]{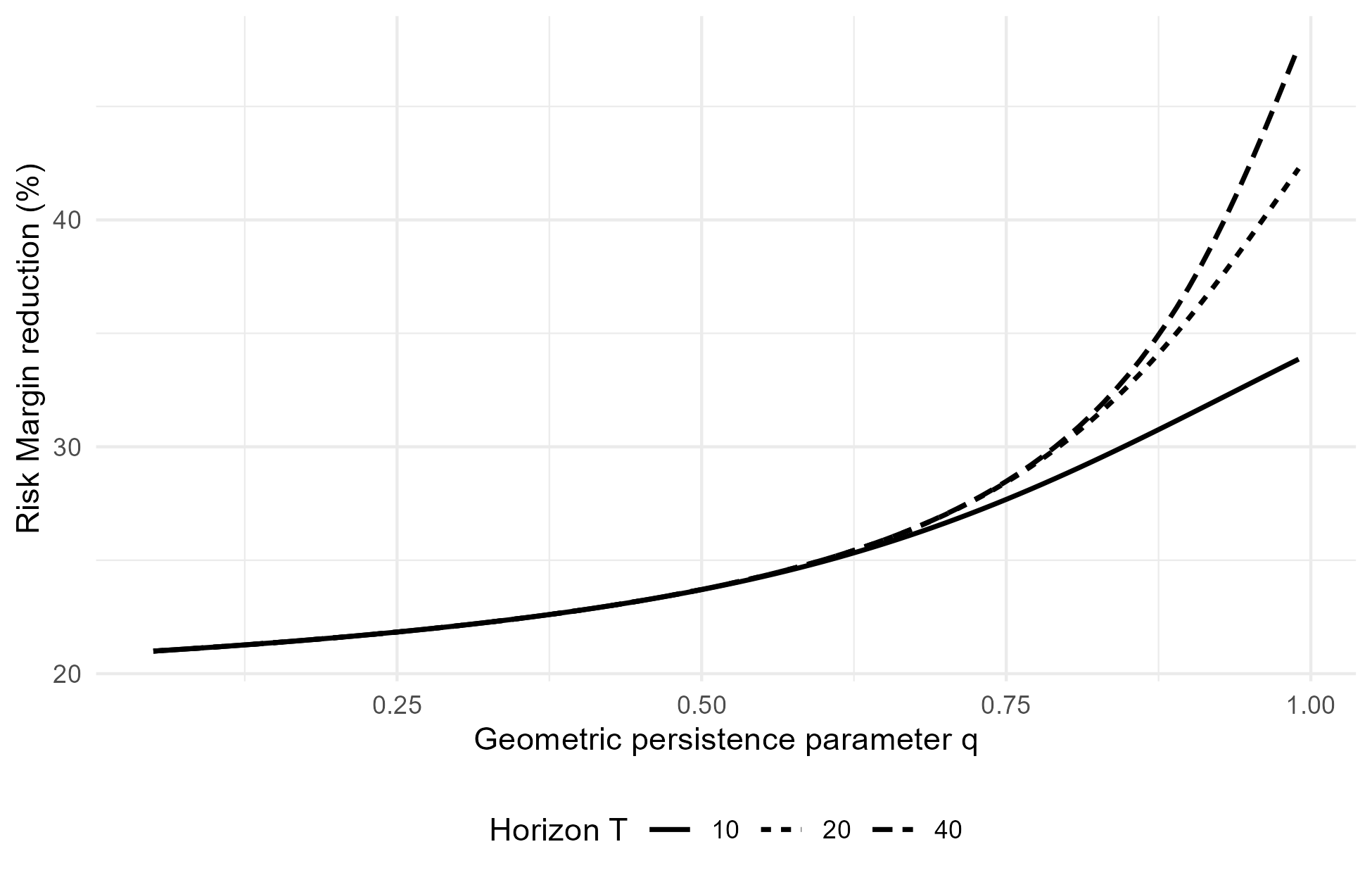}
	\caption{Percentage Risk Margin reduction under geometric SCR run-off.}
	\label{Fig-3-GeometricRunoff}
\end{figure}

The numerical pattern follows Proposition \ref{PersistenceMonotonicityProp}. When \(q\) is small, capital releases rapidly and the reduction remains close to the effect of the cost-of-capital change alone. As \(q\) approaches one, a larger share of discounted capital remains at later dates and receives lower lambda factors.

The effect of extending the horizon is correspondingly limited for rapidly decaying profiles and becomes material when persistence is sufficiently high. At \(q=0.99\), the reductions are approximately \(33.86\%\), \(42.27\%\), and \(47.65\%\) for \(T=10\), \(20\), and \(40\), respectively. Large reform effects therefore arise from the combination of a long horizon and persistent future capital requirements.

The direct interest-rate mechanism derived in Section \ref{InterestRateSensitivity} is examined using three geometric profiles,

\[
(T,q)
=
(10,0.65),\qquad
(25,0.85),\qquad
(50,0.95).
\]

The labels short, medium and long distinguish these three constructed run-off patterns.

Starting from the flat \(3\%\) baseline curve, parallel continuously compounded shifts ranging from \(-200\) to \(+200\) basis points are applied while holding the projected SCR paths fixed. The experiment therefore isolates the direct discounting channel characterized in Propositions \ref{RateSensitivityReductionProp} and \ref{ReformRatioRateProp}.

Figure \ref{Fig-4-InterestRateSensitivity} reports the resulting reform effects.

\begin{figure}[htbp]
	\centering
	\includegraphics[width=0.84\textwidth]{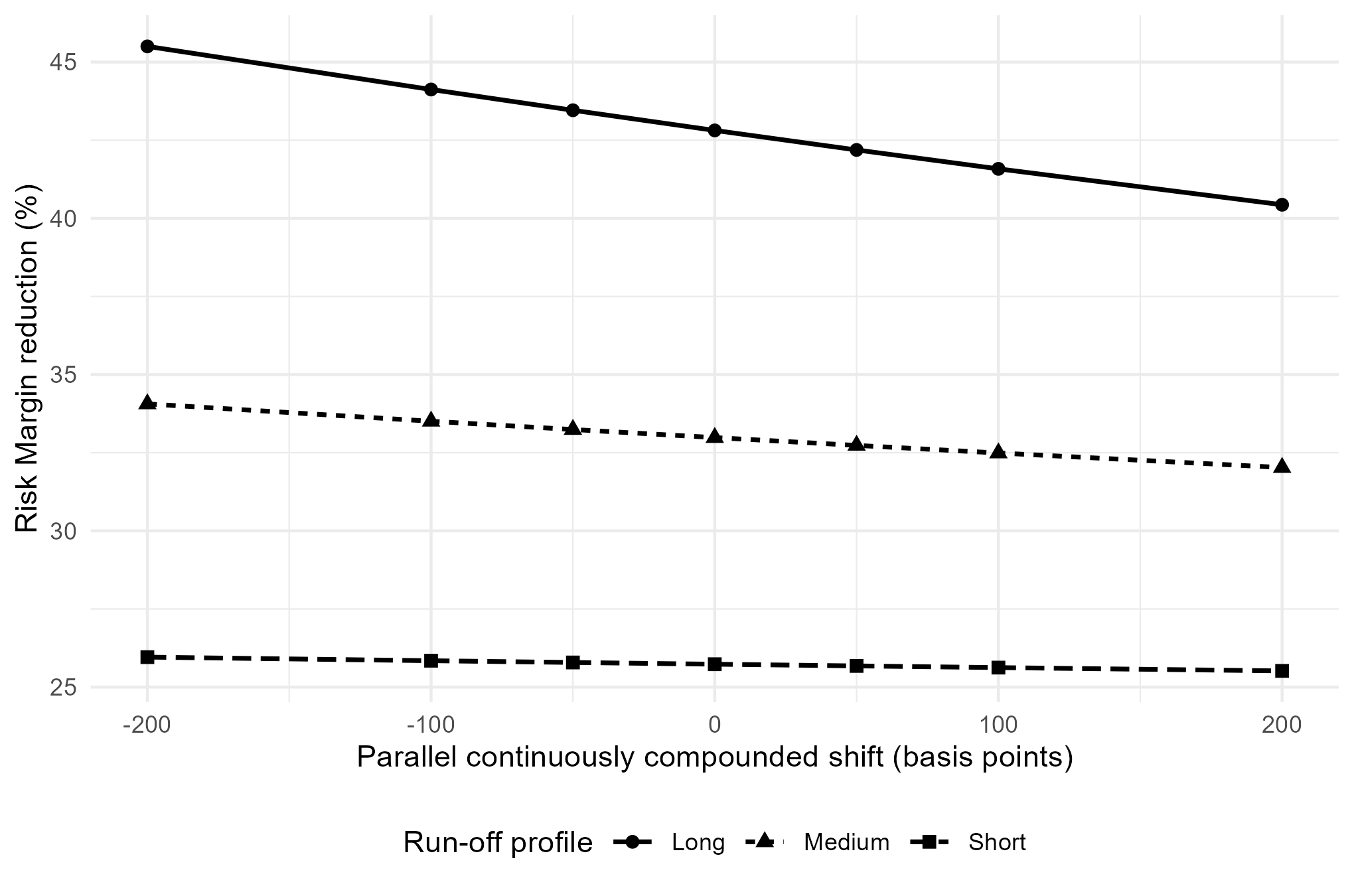}
	\caption{Percentage Risk Margin reduction under parallel continuously compounded rate shifts.}
	\label{Fig-4-InterestRateSensitivity}
\end{figure}

At the baseline curve, the Risk Margin reductions are \(25.73\%\), \(32.98\%\), and \(42.81\%\) for the short, medium and long profiles. The direct semi-elasticities under the previous calibration are

\[
2.641,\qquad
5.545,\qquad
12.036,
\]

compared with

\[
2.492,\qquad
4.787,\qquad
9.815
\]

under the revised calibration. Relative to the previous values, the direct sensitivity reductions are approximately \(5.6\%\), \(13.7\%\), and \(18.5\%\).

Across these three profiles, increasing the continuously compounded shift from \(-200\) to \(+200\) basis points lowers the percentage reduction generated by the reform by approximately \(0.44\), \(2.04\), and \(5.07\) percentage points for the short, medium and long profiles, respectively.

Higher rates reduce the present value of distant capital contributions more strongly. Since those contributions also receive the lowest lambda factors, their declining weight brings the previous and revised calibrations relatively closer. Lower rates increase the relative weight of these distant contributions and strengthen the reform effect. Across the constructed profiles, the magnitude of this direct channel increases with the persistence and horizon of the capital run-off. This mechanism is characterized by Proposition \ref{ReformRatioRateProp} and is consistent with the long-duration numerical behaviour discussed by Waszink (2024)\cite{waszink2024comments}.

\subsection{Stochastic persistence experiments}
\label{StochasticPersistenceExperiments}

The final experiment quantifies the persistence-uncertainty effects developed in Section \ref{StochasticRunoffPersistence}. Let

\[
Q
\sim
\operatorname{Beta}(a,b),
\]

with fixed mean

\[
\mathbb{E}[Q]=0.85.
\]

Set

\[
a=0.85\phi,
\qquad
b=0.15\phi,
\qquad
\phi>0,
\]

where \(\phi\) is a concentration parameter. Varying \(\phi\) changes the dispersion of \(Q\) while preserving its mean. The calculations use \(T=40\), \(SCR_0=1\), and a flat annual effective risk-free rate of \(3\%\). The scenario averages are evaluated analytically from the moments of the Beta distribution.

Table \ref{Tab-3-StochasticPersistence} reports the scenario-average Risk Margins under structural persistence together with the corresponding Jensen gaps.

\begin{table}[htbp]
	\centering
	\small
	\caption{Scenario-average Risk Margins and Jensen gaps under stochastic persistence.}
	\label{Tab-3-StochasticPersistence}
	\begin{tabular}{rrrrrrr}
		\toprule
		\(\phi\) &
		\(\operatorname{Var}(Q)\) &
		\(\mathbb{E}[RM_{\mathrm{old}}^{\mathrm{str}}]\) &
		\(\mathbb{E}[RM_{\mathrm{new}}^{\mathrm{str}}]\) &
		\(J_{\mathrm{old}}\) &
		\(J_{\mathrm{new}}\) &
		\(J_{\mathrm{new}}/J_{\mathrm{old}}\) \\
		\midrule
		100 & 0.001262 & 0.3463 & 0.2289 & 0.0131 & 0.0062 & 0.4714 \\
		30  & 0.004113 & 0.3776 & 0.2434 & 0.0444 & 0.0207 & 0.4661 \\
		15  & 0.007969 & 0.4215 & 0.2635 & 0.0883 & 0.0408 & 0.4616 \\
		8   & 0.014167 & 0.4907 & 0.2948 & 0.1575 & 0.0721 & 0.4575 \\
		4   & 0.025500 & 0.6054 & 0.3463 & 0.2722 & 0.1236 & 0.4541 \\
		\bottomrule
	\end{tabular}
\end{table}

The Jensen gaps are positive in every scenario, in accordance with Corollary \ref{JensenEffectCor}. Within this model, replacing uncertain path-level persistence by its mean therefore understates the scenario-average Risk Margin.

Under transitory persistence,

\[
\mathbb{E}
\left[
SCR_t^{\mathrm{tr}}
\right]
=
SCR_0
\left(
\mathbb{E}[Q]
\right)^t.
\]

Because the mean of \(Q\) is fixed across the five specifications, the expected transitory Risk Margins are identical across cases, at approximately \(0.3332\) under the previous calibration and \(0.2227\) under the revised calibration. The corresponding structural values are strictly larger, as established in Proposition \ref{StructuralTransitoryProp}.

Within the selected Beta family, the increase in \(\operatorname{Var}(Q)\) is accompanied by a substantial increase in both Jensen gaps. The previous-calibration gap rises from approximately \(0.0131\) at

\[
\operatorname{Var}(Q)=0.001262
\]

to \(0.2722\) at

\[
\operatorname{Var}(Q)=0.025500.
\]

Over the same range, the revised-calibration gap increases from \(0.0062\) to \(0.1236\).

This monotonic numerical pattern is specific to the selected parametric family. The general comparison established in Proposition \ref{PersistenceConvexOrderProp} is formulated in terms of convex order.

The revised calibration substantially attenuates the Jensen gap. Across the five specifications,

\[
0.4541
\leq
\frac{J_{\mathrm{new}}}{J_{\mathrm{old}}}
\leq
0.4714,
\]

within the analytical interval from Proposition \ref{JensenGapAttenuationProp},

\[
0.3958
\leq
\frac{J_{\mathrm{new}}}{J_{\mathrm{old}}}
\leq
0.7296.
\]

Persistence uncertainty therefore continues to increase the scenario-average cost of future capital, while the combined effect of the lower cost-of-capital rate and the lambda weighting reduces the associated convexity contribution substantially.

Taken together, the numerical results indicate that neither the opening SCR nor the terminal projection horizon alone identifies the effect of the reform. Its magnitude depends on the temporal allocation and dispersion of discounted capital, the relationship between the SCR and Best Estimate run-off profiles, and the structural persistence of future capital requirements. These features are therefore central to actuarial projection, model validation and the interpretation of the 2027 Risk Margin.
\section{Discussion}
\label{Discussion}

Under both the previous and revised cost-of-capital calibrations, the timing of future capital affects the Risk Margin through discounting. The 2027 reform adds a second temporal channel by applying different regulatory weights according to the position of each capital contribution in the run-off. The reduction in the prescribed cost-of-capital rate remains uniform across the projected SCR path, whereas the lambda factor creates additional portfolio heterogeneity. Portfolios with comparable opening SCRs or Risk Margins can consequently experience different reform effects when their future capital requirements are released at different speeds.

Mean discounted-SCR run-off time must be complemented by information on temporal dispersion. At a fixed mean, different allocations of capital across the projection horizon can generate materially different reductions because the lambda sequence is nonlinear. The sharp bounds conditional on the horizon and mean quantify the remaining range of possible reform effects and thereby measure the information lost when the full capital profile is replaced by a one-dimensional timing summary. This range becomes particularly relevant for profiles extending towards the region in which the regulatory floor becomes binding.

The \(50\%\) floor is a lower bound on the lambda factor. Combined with the reduction in the prescribed cost-of-capital rate, it produces a maximum mechanical Risk Margin reduction of approximately \(60.42\%\). Attainment requires the entire discounted SCR mass to be located at dates for which the floor applies. The value is therefore a sharp theoretical limit associated with an extremal capital profile. The published run-off profiles considered in the numerical analysis remain well below this limit because most of their discounted capital is concentrated in the early years.

The interaction with simplified SCR projection introduces a separate model-validation issue. Under the proportional Best Estimate approximation, the difference between the full and approximated reform ratios is determined exactly by the covariance between capital intensity and the lambda sequence. A zero covariance preserves the reform ratio, with constant capital intensity representing one sufficient configuration. The approximation error in the Risk Margin level follows a distinct decomposition that also depends on the average capital-intensity level. Relative reform effects and absolute valuation errors are therefore governed by different combinations of timing and scale.

An approximation can reproduce the aggregate level of projected capital while allocating that capital differently across the run-off. Under the revised calibration, such timing discrepancies enter the Risk Margin through the lambda weighting. The England et al. (2019)\cite{england2019lifetime} profiles illustrate this mechanism in a setting where the alternative capital run-offs remain relatively close. Their numerical differences are modest, while the covariance representation identifies the channel through which larger departures between liability and capital run-offs would affect the reform ratio. A corresponding validation of proportional projection therefore examines the temporal profile of capital intensity in addition to aggregate approximation errors.

This model-validation requirement also reinforces the separation between the underlying evolution of risk and the regulatory attenuation applied after the SCR projection. The revised EIOPA guidance requires projected SCRs to exclude the time dependency already represented by the lambda factor \cite{eiopa2026valuation}. Under simplified projection methods, embedding an additional regulatory-style attenuation within the projected capital path would modify its temporal structure before application of the prescribed lambda sequence and would therefore duplicate the intended adjustment.

The interest-rate results provide a further implication of temporal reweighting. Conditional on a fixed projected SCR path and parallel shifts in continuously compounded discount rates, the revised Risk Margin has a weakly lower direct semi-elasticity than the previous calibration. The inequality is strict when the discounted SCR support includes dates associated with distinct lambda factors. Across the controlled profiles, the difference is limited for rapidly declining capital paths and becomes larger as capital remains material further into the run-off.

This comparison characterizes the direct discounting channel. The total response of the Risk Margin and the wider balance sheet additionally reflects the sensitivity of projected SCRs to interest rates. As shown in Section \ref{InterestRateSensitivity}, the ordering of total semi-elasticities depends on the term structure of this capital response and is therefore portfolio and capital-model dependent. This separation between direct and capital-mediated effects provides the appropriate scope for interpreting the lower sensitivity associated with lambda reweighting and is consistent with the broader sensitivity analysis discussed by Waszink (2024)\cite{waszink2024comments}.

The stochastic persistence analysis extends the timing perspective to projection uncertainty. Within the reduced-form model, the Risk Margin is convex in the path-level persistence parameter. For non-degenerate persistence and a run-off containing contributions at dates \(t\geq2\), replacing uncertain persistence by its mean understates the scenario-average Risk Margin. The expectation over persistence scenarios serves as a diagnostic measure of projection uncertainty, while the regulatory Risk Margin remains the conditional calculation associated with a given projected SCR path.

The comparison between structural and transitory persistence isolates the role of dependence across time. Applying a common path-level factor generates a weakly larger scenario-average accumulation of future capital than independently renewing persistence factors drawn from the same one-period distribution. The difference is strict under non-degenerate multi-period persistence and arises from dependence across the run-off rather than from the marginal distribution of the individual factors. The revised calibration attenuates the associated Jensen effect while preserving the influence of persistence uncertainty.

The numerical evidence has two distinct functions. The England et al. (2019)\cite{england2019lifetime} and Dreksler et al. (2015)\cite{dreksler2015solvency} profiles provide published actuarial benchmarks for relatively short run-offs. The long-horizon, interest-rate and stochastic-persistence experiments are controlled constructions that isolate mechanisms jointly determined in those published examples. The larger effects generated by persistent capital extending into the floor region are therefore comparative-static implications of the regulatory structure. Their translation into portfolio-level magnitudes requires projected SCR profiles reflecting the liabilities, risk composition and capital model of the undertaking concerned.

The analysis deliberately isolates the reform of the Risk Margin formula. Except in the experiments designed to vary a specific input, the projected SCR path before application of the regulatory time-dependent factor and the discounting structure are held common across the previous and revised calculations. The wider 2027 Solvency II review can affect technical provisions, capital requirements and other balance-sheet components through additional channels. The results characterize the revised cost-of-capital rate and lambda weighting conditional on the capital dynamics supplied to the Risk Margin calculation.

For actuarial risk management, the reform increases the importance of the projected distribution of capital across the liability run-off. Model governance consequently encompasses the timing and dispersion of future SCR contributions, the evolution of capital intensity relative to the Best Estimate, and the sensitivity of these features to economic and insurance-risk assumptions. This profile-based assessment complements validation based on the opening SCR or aggregate Risk Margin and provides a direct link between regulatory calibration, capital projection and the management of long-duration insurance obligations.

\section*{Conclusion}

This paper develops an analytical framework for the final 2027 calibration of the Solvency II Risk Margin. Normalizing discounted projected SCR contributions into a distribution over run-off time converts the new-to-old Risk Margin ratio into an expectation of the regulatory lambda factor. This representation provides a common basis for analyzing run-off timing, temporal dispersion, proportional Best Estimate projection, direct interest-rate sensitivity and uncertainty in capital persistence. It yields exact identities, stochastic-ordering results and sharp bounds while separating relative reform effects from approximation errors in the Risk Margin level.

The central conclusion is that the effect of the reform is a property of the entire projected capital run-off. Opening capital and aggregate liability measures provide only partial information once future SCR contributions receive different regulatory weights according to their timing. For undertakings and supervisors, implementation therefore places greater emphasis on the temporal validity of SCR projections and on the separation between underlying risk dynamics and the attenuation already incorporated through the lambda factor.

Future research can extend the framework to jointly stochastic interest rates and capital requirements, richer dependence structures and portfolio-level empirical applications. Embedding these elements in a full balance-sheet model would connect the transparent comparative statics developed here with firm-specific solvency assessment and risk management.

\bibliography{S2-RM-Bib}


\begin{thebibliography}{29}
\ifx \bisbn   \undefined \def \bisbn  #1{ISBN #1}\fi
\ifx \binits  \undefined \def \binits#1{#1}\fi
\ifx \bauthor  \undefined \def \bauthor#1{#1}\fi
\ifx \batitle  \undefined \def \batitle#1{#1}\fi
\ifx \bjtitle  \undefined \def \bjtitle#1{#1}\fi
\ifx \bvolume  \undefined \def \bvolume#1{\textbf{#1}}\fi
\ifx \byear  \undefined \def \byear#1{#1}\fi
\ifx \bissue  \undefined \def \bissue#1{#1}\fi
\ifx \bfpage  \undefined \def \bfpage#1{#1}\fi
\ifx \blpage  \undefined \def \blpage #1{#1}\fi
\ifx \burl  \undefined \def \burl#1{\textsf{#1}}\fi
\ifx \doiurl  \undefined \def \doiurl#1{\url{https://doi.org/#1}}\fi
\ifx \betal  \undefined \def \betal{\textit{et al.}}\fi
\ifx \binstitute  \undefined \def \binstitute#1{#1}\fi
\ifx \binstitutionaled  \undefined \def \binstitutionaled#1{#1}\fi
\ifx \bctitle  \undefined \def \bctitle#1{#1}\fi
\ifx \beditor  \undefined \def \beditor#1{#1}\fi
\ifx \bpublisher  \undefined \def \bpublisher#1{#1}\fi
\ifx \bbtitle  \undefined \def \bbtitle#1{#1}\fi
\ifx \bedition  \undefined \def \bedition#1{#1}\fi
\ifx \bseriesno  \undefined \def \bseriesno#1{#1}\fi
\ifx \blocation  \undefined \def \blocation#1{#1}\fi
\ifx \bsertitle  \undefined \def \bsertitle#1{#1}\fi
\ifx \bsnm \undefined \def \bsnm#1{#1}\fi
\ifx \bsuffix \undefined \def \bsuffix#1{#1}\fi
\ifx \bparticle \undefined \def \bparticle#1{#1}\fi
\ifx \barticle \undefined \def \barticle#1{#1}\fi
\bibcommenthead
\ifx \bconfdate \undefined \def \bconfdate #1{#1}\fi
\ifx \botherref \undefined \def \botherref #1{#1}\fi
\ifx \url \undefined \def \url#1{\textsf{#1}}\fi
\ifx \bchapter \undefined \def \bchapter#1{#1}\fi
\ifx \bbook \undefined \def \bbook#1{#1}\fi
\ifx \bcomment \undefined \def \bcomment#1{#1}\fi
\ifx \oauthor \undefined \def \oauthor#1{#1}\fi
\ifx \citeauthoryear \undefined \def \citeauthoryear#1{#1}\fi
\ifx \endbibitem  \undefined \def \endbibitem {}\fi
\ifx \bconflocation  \undefined \def \bconflocation#1{#1}\fi
\ifx \arxivurl  \undefined \def \arxivurl#1{\textsf{#1}}\fi
\csname PreBibitemsHook\endcsname

\bibitem[\protect\citeauthoryear{Albrecher et~al.}{2022}]{albrecher2022cost}
\begin{barticle}
\bauthor{\bsnm{Albrecher}, \binits{H.}},
\bauthor{\bsnm{Eisele}, \binits{K.-T.}},
\bauthor{\bsnm{Steffensen}, \binits{M.}},
\bauthor{\bsnm{W{\"u}thrich}, \binits{M.V.}}:
\batitle{On the cost-of-capital rate under incomplete market valuation}.
\bjtitle{Journal of Risk and Insurance}
\bvolume{89}(\bissue{4}),
\bfpage{1139}--\blpage{1158}
(\byear{2022})
\doiurl{10.1111/jori.12406}
\end{barticle}
\endbibitem

\bibitem[\protect\citeauthoryear{Barigou et~al.}{2022}]{barigou2022insurance}
\begin{barticle}
\bauthor{\bsnm{Barigou}, \binits{K.}},
\bauthor{\bsnm{Bignozzi}, \binits{V.}},
\bauthor{\bsnm{Tsanakas}, \binits{A.}}:
\batitle{Insurance valuation: A two-step generalised regression approach}.
\bjtitle{ASTIN Bulletin}
\bvolume{52}(\bissue{1}),
\bfpage{211}--\blpage{245}
(\byear{2022})
\doiurl{10.1017/asb.2021.31}
\end{barticle}
\endbibitem

\bibitem[\protect\citeauthoryear{Barigou et~al.}{2019}]{barigou2019fair}
\begin{barticle}
\bauthor{\bsnm{Barigou}, \binits{K.}},
\bauthor{\bsnm{Chen}, \binits{Z.}},
\bauthor{\bsnm{Dhaene}, \binits{J.}}:
\batitle{Fair dynamic valuation of insurance liabilities: Merging actuarial
  judgement with market- and time-consistency}.
\bjtitle{Insurance: Mathematics and Economics}
\bvolume{88},
\bfpage{19}--\blpage{29}
(\byear{2019})
\doiurl{10.1016/j.insmatheco.2019.05.003}
\end{barticle}
\endbibitem

\bibitem[\protect\citeauthoryear{Bauer et~al.}{2012}]{bauer2012calculation}
\begin{barticle}
\bauthor{\bsnm{Bauer}, \binits{D.}},
\bauthor{\bsnm{Reuss}, \binits{A.}},
\bauthor{\bsnm{Singer}, \binits{D.}}:
\batitle{On the calculation of the solvency capital requirement based on nested
  simulations}.
\bjtitle{ASTIN Bulletin}
\bvolume{42}(\bissue{2}),
\bfpage{453}--\blpage{499}
(\byear{2012})
\doiurl{10.2143/AST.42.2.2182805}
\end{barticle}
\endbibitem

\bibitem[\protect\citeauthoryear{Christiansen and
  Niemeyer}{2014}]{christiansen2014fundamental}
\begin{barticle}
\bauthor{\bsnm{Christiansen}, \binits{M.C.}},
\bauthor{\bsnm{Niemeyer}, \binits{A.}}:
\batitle{Fundamental definition of the solvency capital requirement in
  {Solvency II}}.
\bjtitle{ASTIN Bulletin}
\bvolume{44}(\bissue{3}),
\bfpage{501}--\blpage{533}
(\byear{2014})
\doiurl{10.1017/asb.2014.10}
\end{barticle}
\endbibitem

\bibitem[\protect\citeauthoryear{Dhaene et~al.}{2017}]{dhaene2017fair}
\begin{barticle}
\bauthor{\bsnm{Dhaene}, \binits{J.}},
\bauthor{\bsnm{Stassen}, \binits{B.}},
\bauthor{\bsnm{Barigou}, \binits{K.}},
\bauthor{\bsnm{Linders}, \binits{D.}},
\bauthor{\bsnm{Chen}, \binits{Z.}}:
\batitle{Fair valuation of insurance liabilities: Merging actuarial judgement
  and market-consistency}.
\bjtitle{Insurance: Mathematics and Economics}
\bvolume{76},
\bfpage{14}--\blpage{27}
(\byear{2017})
\doiurl{10.1016/j.insmatheco.2017.06.003}
\end{barticle}
\endbibitem

\bibitem[\protect\citeauthoryear{Diers and Linde}{2013}]{diers2013multiyear}
\begin{barticle}
\bauthor{\bsnm{Diers}, \binits{D.}},
\bauthor{\bsnm{Linde}, \binits{M.}}:
\batitle{The multi-year non-life insurance risk in the additive loss reserving
  model}.
\bjtitle{Insurance: Mathematics and Economics}
\bvolume{52}(\bissue{3}),
\bfpage{590}--\blpage{598}
(\byear{2013})
\doiurl{10.1016/j.insmatheco.2013.03.004}
\end{barticle}
\endbibitem

\bibitem[\protect\citeauthoryear{Diers et~al.}{2016}]{diers2016addendum}
\begin{barticle}
\bauthor{\bsnm{Diers}, \binits{D.}},
\bauthor{\bsnm{Linde}, \binits{M.}},
\bauthor{\bsnm{Hahn}, \binits{L.}}:
\batitle{Addendum to ``the multi-year non-life insurance risk in the additive
  reserving model'' [{Insurance Math. Econom.} 52(3) (2013) 590--598]:
  Quantification of multi-year non-life insurance risk in chain-ladder
  reserving models}.
\bjtitle{Insurance: Mathematics and Economics}
\bvolume{67},
\bfpage{187}--\blpage{199}
(\byear{2016})
\doiurl{10.1016/j.insmatheco.2015.10.013}
\end{barticle}
\endbibitem

\bibitem[\protect\citeauthoryear{Dreksler et~al.}{2015}]{dreksler2015solvency}
\begin{barticle}
\bauthor{\bsnm{Dreksler}, \binits{S.}},
\bauthor{\bsnm{Allen}, \binits{C.}},
\bauthor{\bsnm{Akoh-Arrey}, \binits{A.}},
\bauthor{\bsnm{Courchene}, \binits{J.A.}},
\bauthor{\bsnm{Junaid}, \binits{B.}},
\bauthor{\bsnm{Kirk}, \binits{J.}},
\bauthor{\bsnm{Lowe}, \binits{W.}},
\bauthor{\bsnm{O'Dea}, \binits{S.}},
\bauthor{\bsnm{Piper}, \binits{J.}},
\bauthor{\bsnm{Shah}, \binits{M.}},
\bauthor{\bsnm{Shaw}, \binits{G.}},
\bauthor{\bsnm{Storman}, \binits{D.}},
\bauthor{\bsnm{Thaper}, \binits{S.}},
\bauthor{\bsnm{Thomas}, \binits{L.}},
\bauthor{\bsnm{Wheatley}, \binits{M.}},
\bauthor{\bsnm{Wilson}, \binits{M.}}:
\batitle{{Solvency II} technical provisions for general insurers}.
\bjtitle{British Actuarial Journal}
\bvolume{20}(\bissue{1}),
\bfpage{7}--\blpage{129}
(\byear{2015})
\doiurl{10.1017/S1357321714000099}
\end{barticle}
\endbibitem

\bibitem[\protect\citeauthoryear{England et~al.}{2019}]{england2019lifetime}
\begin{barticle}
\bauthor{\bsnm{England}, \binits{P.D.}},
\bauthor{\bsnm{Verrall}, \binits{R.J.}},
\bauthor{\bsnm{W{\"u}thrich}, \binits{M.V.}}:
\batitle{On the lifetime and one-year views of reserve risk, with application
  to {IFRS 17} and {Solvency II} risk margins}.
\bjtitle{Insurance: Mathematics and Economics}
\bvolume{85},
\bfpage{74}--\blpage{88}
(\byear{2019})
\doiurl{10.1016/j.insmatheco.2018.12.002}
\end{barticle}
\endbibitem

\bibitem[\protect\citeauthoryear{Engsner et~al.}{2017}]{engsner2017insurance}
\begin{barticle}
\bauthor{\bsnm{Engsner}, \binits{H.}},
\bauthor{\bsnm{Lindholm}, \binits{M.}},
\bauthor{\bsnm{Lindskog}, \binits{F.}}:
\batitle{Insurance valuation: A computable multi-period cost-of-capital
  approach}.
\bjtitle{Insurance: Mathematics and Economics}
\bvolume{72},
\bfpage{250}--\blpage{264}
(\byear{2017})
\doiurl{10.1016/j.insmatheco.2016.12.002}
\end{barticle}
\endbibitem

\bibitem[\protect\citeauthoryear{{European
  Commission}}{2026}]{europeancommission2026delegated}
\begin{botherref}
\oauthor{\bsnm{{European Commission}}}:
Commission Delegated Regulation ({EU}) 2026/269 of 29 October 2025 Amending
  Delegated Regulation ({EU}) 2015/35 as Regards Technical Provisions,
  Long-Term Guarantee Measures, Own Funds, Equity Risk, Spread Risk on
  Securitisation Positions, Other Standard Formula Capital Requirements,
  Reporting and Disclosure, Proportionality and Group Solvency.
Official Journal of the European Union, L Series, 18 February 2026
(2026).
\url{https://eur-lex.europa.eu/eli/reg_del/2026/269/oj}
\end{botherref}
\endbibitem

\bibitem[\protect\citeauthoryear{{European Insurance and Occupational Pensions
  Authority}}{2026a}]{eiopa2026riskmarginrts}
\begin{botherref}
\oauthor{\bsnm{{European Insurance and Occupational Pensions Authority}}}:
Final report on draft regulatory technical standards amending {Article 58} of
  {Commission Delegated Regulation ({EU}) 2015/35} on the simplified
  calculation of the risk margin.
Technical Report EIOPA-BoS-26/289,
EIOPA
(July 2026).
\url{https://www.eiopa.europa.eu/publications/final-report-rts-simplified-calculation-risk-margin-solvency-ii-review_en}
\end{botherref}
\endbibitem

\bibitem[\protect\citeauthoryear{{European Insurance and Occupational Pensions
  Authority}}{2026b}]{eiopa2026valuation}
\begin{botherref}
\oauthor{\bsnm{{European Insurance and Occupational Pensions Authority}}}:
Final report on revised {Guidelines} on valuation of technical provisions.
Technical Report EIOPA-BoS-26/285,
EIOPA
(July 2026).
\url{https://www.eiopa.europa.eu/publications/final-report-revised-guidelines-valuation-technical-provisions-solvency-ii-review_en}
\end{botherref}
\endbibitem

\bibitem[\protect\citeauthoryear{{European Parliament and Council of the
  European Union}}{2025}]{europeanunion2025solvency}
\begin{botherref}
\oauthor{\bsnm{{European Parliament and Council of the European Union}}}:
Directive ({EU}) 2025/2 of the European Parliament and of the Council of 27
  November 2024 Amending Directive 2009/138/EC as Regards Proportionality,
  Quality of Supervision, Reporting, Long-Term Guarantee Measures,
  Macro-Prudential Tools, Sustainability Risks and Group and Cross-Border
  Supervision, and Amending Directives 2002/87/EC and 2013/34/EU.
Official Journal of the European Union, L Series, 8 January 2025
(2025).
\url{https://eur-lex.europa.eu/eli/dir/2025/2/oj}
\end{botherref}
\endbibitem

\bibitem[\protect\citeauthoryear{Floreani}{2011}]{floreani2011risk}
\begin{barticle}
\bauthor{\bsnm{Floreani}, \binits{A.}}:
\batitle{{Risk Margin} estimation through the cost of capital approach: Some
  conceptual issues}.
\bjtitle{The Geneva Papers on Risk and Insurance - Issues and Practice}
\bvolume{36}(\bissue{2}),
\bfpage{226}--\blpage{253}
(\byear{2011})
\doiurl{10.1057/gpp.2011.2}
\end{barticle}
\endbibitem

\bibitem[\protect\citeauthoryear{Gambaro}{2025}]{gambaro2025capital}
\begin{barticle}
\bauthor{\bsnm{Gambaro}, \binits{A.M.}}:
\batitle{The capital-on-capital cost in {Solvency II} {Risk Margin}}.
\bjtitle{European Actuarial Journal}
\bvolume{15}(\bissue{1}),
\bfpage{137}--\blpage{161}
(\byear{2025})
\doiurl{10.1007/s13385-024-00401-8}
\end{barticle}
\endbibitem

\bibitem[\protect\citeauthoryear{Huang et~al.}{1977}]{huang1977bounds}
\begin{barticle}
\bauthor{\bsnm{Huang}, \binits{C.C.}},
\bauthor{\bsnm{Ziemba}, \binits{W.T.}},
\bauthor{\bsnm{Ben-Tal}, \binits{A.}}:
\batitle{Bounds on the expectation of a convex function of a random variable:
  With applications to stochastic programming}.
\bjtitle{Operations Research}
\bvolume{25}(\bissue{2}),
\bfpage{315}--\blpage{325}
(\byear{1977})
\doiurl{10.1287/opre.25.2.315}
\end{barticle}
\endbibitem

\bibitem[\protect\citeauthoryear{M{\"o}hr}{2011}]{mohr2011market}
\begin{barticle}
\bauthor{\bsnm{M{\"o}hr}, \binits{C.}}:
\batitle{Market-consistent valuation of insurance liabilities by cost of
  capital}.
\bjtitle{{ASTIN} Bulletin}
\bvolume{41}(\bissue{2}),
\bfpage{315}--\blpage{341}
(\byear{2011})
\doiurl{10.2143/AST.41.2.2136980}
\end{barticle}
\endbibitem

\bibitem[\protect\citeauthoryear{Ohlsson and
  Lauzeningks}{2009}]{ohlsson2009oneyear}
\begin{barticle}
\bauthor{\bsnm{Ohlsson}, \binits{E.}},
\bauthor{\bsnm{Lauzeningks}, \binits{J.}}:
\batitle{The one-year non-life insurance risk}.
\bjtitle{Insurance: Mathematics and Economics}
\bvolume{45}(\bissue{2}),
\bfpage{203}--\blpage{208}
(\byear{2009})
\doiurl{10.1016/j.insmatheco.2009.06.001}
\end{barticle}
\endbibitem

\bibitem[\protect\citeauthoryear{Pelkiewicz
  et~al.}{2020}]{pelkiewicz2020review}
\begin{barticle}
\bauthor{\bsnm{Pelkiewicz}, \binits{A.J.}},
\bauthor{\bsnm{Ahmed}, \binits{S.W.}},
\bauthor{\bsnm{Fulcher}, \binits{P.}},
\bauthor{\bsnm{Johnson}, \binits{K.L.}},
\bauthor{\bsnm{Reynolds}, \binits{S.M.}},
\bauthor{\bsnm{Schneider}, \binits{R.J.}},
\bauthor{\bsnm{Scott}, \binits{A.J.}}:
\batitle{A review of the {Risk Margin}: {Solvency II} and beyond}.
\bjtitle{British Actuarial Journal}
\bvolume{25},
\bfpage{1}
(\byear{2020})
\doiurl{10.1017/S135732172000001X}
\end{barticle}
\endbibitem

\bibitem[\protect\citeauthoryear{Pelsser and Stadje}{2014}]{pelsser2014time}
\begin{barticle}
\bauthor{\bsnm{Pelsser}, \binits{A.}},
\bauthor{\bsnm{Stadje}, \binits{M.}}:
\batitle{Time-consistent and market-consistent evaluations}.
\bjtitle{Mathematical Finance}
\bvolume{24}(\bissue{1}),
\bfpage{25}--\blpage{65}
(\byear{2014})
\doiurl{10.1111/mafi.12026}
\end{barticle}
\endbibitem

\bibitem[\protect\citeauthoryear{Salzmann and
  W{\"u}thrich}{2010}]{salzmann2010cost}
\begin{barticle}
\bauthor{\bsnm{Salzmann}, \binits{R.}},
\bauthor{\bsnm{W{\"u}thrich}, \binits{M.V.}}:
\batitle{Cost-of-capital margin for a general insurance liability runoff}.
\bjtitle{ASTIN Bulletin}
\bvolume{40}(\bissue{2}),
\bfpage{415}--\blpage{451}
(\byear{2010})
\doiurl{10.2143/AST.40.2.2061123}
\end{barticle}
\endbibitem

\bibitem[\protect\citeauthoryear{Shaked and
  Shanthikumar}{2007}]{shaked2007stochastic}
\begin{bbook}
\bauthor{\bsnm{Shaked}, \binits{M.}},
\bauthor{\bsnm{Shanthikumar}, \binits{J.G.}}:
\bbtitle{Stochastic Orders}.
\bsertitle{Springer Series in Statistics}.
\bpublisher{Springer},
\blocation{New York}
(\byear{2007}).
\doiurl{10.1007/978-0-387-34675-5}
\end{bbook}
\endbibitem

\bibitem[\protect\citeauthoryear{Tsanakas et~al.}{2013}]{tsanakas2013market}
\begin{barticle}
\bauthor{\bsnm{Tsanakas}, \binits{A.}},
\bauthor{\bsnm{W{\"u}thrich}, \binits{M.V.}},
\bauthor{\bsnm{{\v{C}}ern{\'y}}, \binits{A.}}:
\batitle{Market value margin via mean--variance hedging}.
\bjtitle{ASTIN Bulletin}
\bvolume{43}(\bissue{3}),
\bfpage{301}--\blpage{322}
(\byear{2013})
\doiurl{10.1017/asb.2013.18}
\end{barticle}
\endbibitem

\bibitem[\protect\citeauthoryear{Waszink}{2024}]{waszink2024comments}
\begin{barticle}
\bauthor{\bsnm{Waszink}, \binits{H.}}:
\batitle{Comments on the {Solvency II} {Risk Margin} and proposed amendments}.
\bjtitle{British Actuarial Journal}
\bvolume{29},
\bfpage{7}
(\byear{2024})
\doiurl{10.1017/S1357321724000047}
\end{barticle}
\endbibitem

\bibitem[\protect\citeauthoryear{W{\"u}thrich et~al.}{2011}]{wuthrich2011risk}
\begin{barticle}
\bauthor{\bsnm{W{\"u}thrich}, \binits{M.V.}},
\bauthor{\bsnm{Embrechts}, \binits{P.}},
\bauthor{\bsnm{Tsanakas}, \binits{A.}}:
\batitle{Risk margin for a non-life insurance run-off}.
\bjtitle{Statistics \& Risk Modeling}
\bvolume{28}(\bissue{4}),
\bfpage{299}--\blpage{317}
(\byear{2011})
\doiurl{10.1524/strm.2011.1096}
\end{barticle}
\endbibitem

\bibitem[\protect\citeauthoryear{W{\"u}thrich
  et~al.}{2009}]{wuthrich2009uncertainty}
\begin{barticle}
\bauthor{\bsnm{W{\"u}thrich}, \binits{M.V.}},
\bauthor{\bsnm{Merz}, \binits{M.}},
\bauthor{\bsnm{Lysenko}, \binits{N.}}:
\batitle{Uncertainty of the claims development result in the chain ladder
  method}.
\bjtitle{Scandinavian Actuarial Journal}
\bvolume{2009}(\bissue{1}),
\bfpage{63}--\blpage{84}
(\byear{2009})
\doiurl{10.1080/03461230801979732}
\end{barticle}
\endbibitem

\end{thebibliography}

\end{document}